\let\myPushQED=\pushQED
\let\myPopQED=\popQED
\newcommand{\myignore}[1]{}
\newenvironment{proof*}
  {\let\pushQED=\myignore\begin{proof}\let\pushQED=\myPushQED}
  {\def\popQED{}\end{proof}\let\popQED=\myPopQED}
\newenvironment{description*}%
  {\vspace{-1ex}\begin{description}%
    \setlength{\itemsep}{-0.5ex}%
    \setlength{\parsep}{0pt}}%
  {\end{description}}
\newenvironment{itemize*}%
  {\vspace{-1ex}\begin{itemize}%
    \setlength{\itemsep}{-0.5ex}%
    \setlength{\parsep}{0pt}}%
  {\end{itemize}}
\newenvironment{enumerate*}%
  {\vspace{-1ex}\begin{enumerate}%
    \setlength{\itemsep}{-0.5ex}%
    \setlength{\parsep}{0pt}}%
  {\end{enumerate}}
 \gdef\xxxmark{%
   \expandafter\ifx\csname @mpargs\endcsname\relax 
     \expandafter\ifx\csname @captype\endcsname\relax 
       \marginpar{xxx}
     \else
       xxx 
     \fi
   \else
     xxx 
   \fi}
 \gdef\xxx{\@ifnextchar[\xxx@lab\xxx@nolab}
 \long\gdef\xxx@lab[#1]#2{{\bf [\xxxmark #2 ---{\sc #1}]}}
 \long\gdef\xxx@nolab#1{{\bf [\xxxmark #1]}}
\newtheorem{theorem}{Theorem}
\newtheorem{lemma}[theorem]{Lemma}
\newtheorem{claim}[theorem]{Claim}
\newtheorem{observation}[theorem]{Observation}
\newtheorem{invariant}[theorem]{Invariant}
\newcommand{\eps}{\varepsilon}
\newcommand{\twodots}{\mathinner{\ldotp\ldotp}}
\newcommand{\proc}[1]{\textnormal{\scshape#1}}
\newcommand{\calD}{\mathcal{D}}
\newcommand{\Dyes}{\calD_{\mathrm{YES}}}
\newcommand{\Dno}{\calD_{\mathrm{NO}}}
\newcommand{\imax}{{i_{\max}}}
\newcommand{\imin}{{i_{\min}}}
\newcommand{\tmin}{t^{\min}}
\let\phi=\varphi
\newcommand{\E}{\mathbf{E}}
\newcommand{\foot}{\mathrm{Foot}}
\newcommand{\MSB}{\proc{high}}
\newcommand{\LSB}{\proc{low}}
\title{Using Hashing to Solve the Dictionary Problem \\(In External Memory)}
\author{
     John Iacono \\ NYU Poly
\and Mihai P\v{a}tra\c{s}cu \\ AT\&T Labs
}
\begin{document}

\maketitle

\begin{abstract}
We consider the dictionary problem in external memory and improve the
update time of the well-known \emph{buffer tree} by roughly a
logarithmic factor. For any $\lambda \ge \max \{ \lg\lg n, \log_{M/B}
(n/B) \}$, we can support updates in time $O(\frac{\lambda}{B})$ and
queries in sublogarithmic time, $O(\log_\lambda n)$. We also present a
lower bound in the cell-probe model showing that our data structure is
optimal.

In the RAM, hash tables have been used to solve the dictionary problem
faster than binary search for more than half a century. By contrast,
our data structure is the first to beat the comparison barrier in
external memory. Ours is also the first data structure to depart
convincingly from the \emph{indivisibility} paradigm.
\end{abstract}

\thispagestyle{empty}
\setcounter{page}{0}
\clearpage

\section{Introduction}

\paragraph{The case for buffer trees.}
The dictionary problem asks to maintain a set $S$ of up to $n$ keys
from the universe $U$, under insertions, deletions, and (exact) membership
queries. The keys may also have associated data (given at insert
time), which the queries must retrieve. Many types of hash tables can
solve the dictionary problem with constant time per operation, either
in expectation or with high probability. These solutions assume a
Random Access Machine (RAM) with words of $\Omega(\lg U)$ bits, which are
sufficient to store keys and pointers.

In today's computation environment, the \emph{external memory model} has
become an important alternative to the RAM. In this
model, it is assumed that there is an \emph{memory} which is partitioned into \emph{pages} of $B$ words. 
Accessing each page in memory takes unit time.  The
processor is also equipped with a \emph{cache} of $M$ words ($M/B$ pages),
which is free to access. The model can be applied at various levels,
depending on the size of the problem at hand. For instance, it can
model the interface between disk and main memory, or between main
memory and the CPU's cache.

Hash tables benefit only marginally from the external memory model: in
simple hash tables like chaining, the expected time per operation can
be decreased to $1 + 2^{-\Omega(B)}$ \cite{knuth-vol3}. Note that as
long as $M$ is smaller than $n$, the query time cannot go
significantly below $1$. However, the power of external memory lies in
the paradigm of \emph{buffering}, which permits significantly faster
updates.  In the most extreme case, if a data structure simply wants
to record a fast stream of updates without worrying about queries, it can do so with an amortized
complexity of $O(1/B) \ll 1$ per insertion: accumulate $B$ data items
in cache, and write them out at once into a page.

Buffer trees, introduced by Arge~\cite{arge03buffer}, are one of the
pillars of external memory data structures, along with
$B$-trees. Buffer trees allow insertions at a rate close to the ideal
$1/B$, while maintaining reasonably efficient (but superconstant)
queries. For instance, they allow update time $t_u = O(\frac{\lg
  n}{B})$ and query time $t_q = O(\lg n)$. More generally, they allow
the following range of trade-offs:

\begin{theorem}[Buffer trees \cite{arge03buffer}]\label{buffer}
Buffer trees support updates and queries with the following tradeoffs:
\begin{eqnarray}
t_u = O(\tfrac{\lambda}{B} \lg n) &\qquad&
  t_q = O(\log_\lambda n), \qquad \textrm{for } 2 \le \lambda \le B 
\label{eq:buff-low}  \\
t_u = O(\tfrac{1}{B} \log_\lambda n) & &
  t_q = O(\lambda \lg n), \qquad \textrm{for } 2 \le \lambda \le
  \tfrac{M}{B}.  \label{eq:buff-hi}
\end{eqnarray}
\end{theorem}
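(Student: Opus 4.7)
The plan is to implement a standard B-tree-like search tree with branching factor $\lambda$ (so that the height is $O(\log_\lambda n)$) and to attach to each internal node a \emph{buffer} that accumulates pending updates. An insertion (or a deletion, encoded as a tombstone) is appended to the root's buffer; once a node's buffer becomes full, its contents are sorted by key and pushed down into the buffers of the appropriate children. A query descends the root-to-leaf path, inspecting the buffer at every node so that a pending update shadows stale data in the leaves.

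For the low-fanout regime of (\ref{eq:buff-low}), I would keep each buffer to a single page of $B$ elements. A flush reads one page and, in the worst case, touches the current tail-page of each of the $\lambda$ children's buffers, at an amortized cost of $O(\lambda/B)$ per element per level. Summed over $O(\log_\lambda n)$ levels this gives $t_u = O(\frac{\lambda}{B}\log_\lambda n) = O(\frac{\lambda}{B}\lg n)$, while a query reads one buffer-page plus one navigation lookup at each level, for $t_q = O(\log_\lambda n)$.

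For the high-fanout regime of (\ref{eq:buff-hi}), I would enlarge each buffer to hold $\lambda$ pages, which still fits in cache since $\lambda \le M/B$. A flush loads $\lambda B$ elements, sorts them in-cache, and writes each child's share as a contiguous run; the total I/O cost is $O(\lambda)$, now amortized over $\lambda B$ elements, for $O(1/B)$ per element per level and $t_u = O(\frac{1}{B}\log_\lambda n)$. Queries now scan all $\lambda$ pages of the buffer at each node, paying $O(\lambda)$ per level and $t_q = O(\lambda \log_\lambda n) = O(\lambda \lg n)$.

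The two main obstacles I expect are: (i) interleaving structural rebalancing with pending buffer contents, for which the standard fix is to flush a node's buffer before splitting or merging it, so that the rebalancing cost is paid for by the updates whose arrival triggered the overflow and never breaks the amortized accounting above; and (ii) arguing that queries observe the correct update order, which follows because updates only move from a node to its children during a flush, so any pending update to key $k$ must lie on the current root-to-leaf path for $k$ and is seen when we consult buffers top-down. Deletions piggyback on insertions as tombstones and annihilate matching keys during flushes.
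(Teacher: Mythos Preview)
The paper does not prove this theorem: it is quoted as a known result from Arge~\cite{arge03buffer} and used only as a baseline for comparison. So there is no ``paper's own proof'' to match against.

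Your sketch is the standard buffer-tree argument and is correct. In regime~\eqref{eq:buff-low} a single-page buffer at fanout~$\lambda$ gives flush cost $O(\lambda)$ for $B$ elements, i.e.\ $O(\lambda/B)$ per element per level, and one page per level to query. In regime~\eqref{eq:buff-hi} enlarging the buffer to $\lambda$ pages (which fits in cache since $\lambda\le M/B$) makes the flush cost $O(\lambda)$ for $\lambda B$ elements, i.e.\ $O(1/B)$ per element per level, at the price of scanning $\lambda$ pages per level on a query. Your handling of rebalancing (flush before split/merge) and of query correctness (pending updates lie on the root-to-leaf path) is exactly the usual argument. One cosmetic point: the tighter bounds you actually derive are $t_u=O(\tfrac{\lambda}{B}\log_\lambda n)$ in~\eqref{eq:buff-low} and $t_q=O(\lambda\log_\lambda n)$ in~\eqref{eq:buff-hi}; the statement in the paper simply weakens $\log_\lambda n$ to $\lg n$, which is harmless.
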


In these bounds and the rest of the paper, we make the following
reasonable and common assumptions about the parameters: $B \ge \lg n$; 
$M \ge B^{1+\eps}$ (tall cache assumption); $n \ge M^{1+\eps}$.

The motivation for fast (subconstant) updates in the external memory model
is quite strong. In applications where massive streams of data arrive
at a fast rate, the algorithm may need to operate close to the disk
transfer rate in order to keep up. On the other hand, we want reasonably
efficient data structuring to later find the proverbial needle in the
haystack.

A second motivation comes from the use of data structures in
algorithms. Sorting in external memory takes $O(\frac{n}{B} \log_M
n)$%
which is significantly sublinear for typical values of the parameters.
Achieving a bound close to this becomes the holy grail for many
algorithmic problems. Towards such an end, a data structure that
spends constant time per operation is of little relevance.

Finally, fast updates can translate into fast \emph{queries} in
realistic database scenarios. Database records typically contain many
fields, a subset of which are relevant in a typical queries. Thus, we
would like to maintain various indexes to help with different query
patterns. If updates are efficient, we can maintain more indexes, so
it is more likely that we can find a selective index for a future
query. (We note that this idea is precisely the premise of the
start-up company Tokutek, founded by Michael Bender, Mart\'in
Farach-Colton and Bradley Kuszmaul. By using buffer-tree technology to
support faster insertions in database indexes, the company is
reporting%
\footnote{{\tt http://tokutek.com/}.}
significant improvements in industrial database applications.)

\paragraph{The comparison/indivisibility barrier.}
In internal memory, hash tables can be traced back at least to 1953
\cite{knuth-vol3}. By contrast, in external memory the
state-of-the-art data structures (buffer trees, B-trees, many others
built upon them) are all comparison based!

The reason for using comparison-based data structures in external
memory seems more profound than in internal memory. Consider the
simple task of arranging $n$ keys in a desired order (permuting
data). The best known algorithm takes time $O(\min \{ n, \frac{n}{B}
\log_M n \})$: either implement the permutation ignoring the paging,
or use external memory sorting. Furthermore it has been known since
the seminal paper of Aggarwal and Vitter \cite{aggarwal88extmem} that
if the algorithm manipulates data items as indivisible atoms, this
bound is tight. It is often conjectured this this lower bound holds
for \emph{any} algorithm, not just those in the indivisible
model. This would imply that, whenever $B$ is large enough for the
internal-memory $O(n)$ solution to become irrelevant, a task as simple
as permuting becomes as hard as comparison-based sort.

While external-memory data structures do not need to be comparison
based, they naturally manipulate keys as indivisible objects. This
invariably leads to a comparison algorithm: the branching factors that
the data structure can achieve are related to the number of items in a
page, and such branching factors can be achieved even by simple
comparison-based algorithms. For problems such as dictionary,
predecessor search, or range reporting, the best known bounds are the
bounds of (comparison-based) B-trees or buffer trees, whenever $B$ is
large enough (and the external memory solution overtakes the RAM-based
solution). It is plausible to conjecture that this is an inherent
limitation of external memory data structures (in fact, such a
conjecture was put forth by \cite{yi10dagstuhl}).

Our work presents the first powerful use of hashing to solve the
external memory dictionary problem, and the first data structure to
depart significantly from the indivisibility paradigm. We obtain:

\begin{theorem}
For any $\max \{ \lg\lg n, \log_M n \} \le \lambda \le B$, we can
solve the dictionary problem by a Las Vegas data structure with update
time $t_u = O(\frac{\lambda}{B})$ and query time $t_q = O(\log_\lambda
n)$ with high probability.
\end{theorem}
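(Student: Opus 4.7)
The plan is to construct a randomized hash-based hierarchy that mirrors the geometry of a buffer tree with branching factor~$\lambda$ but replaces each comparison-based routing buffer with an external-memory hash table, so that a query traverses every level using $O(1)$ I/Os rather than a scan. Concretely, I would maintain $L=\Theta(\log_\lambda n)$ levels; level~$i$ is a hash table of $\Theta(\lambda^i)$ buckets of $O(B)$ slots each, addressed by its own $\Theta(\log n)$-wise independent hash function. A query for $x$ hashes $x$ at every level and reads a single bucket per level, giving $O(\log_\lambda n)$ I/Os; standard Chernoff-style bounds for bounded-independence hashing keep every bucket $O(B)$-sized with high probability, and rare overflows trigger a Las Vegas rebuild of the offending level with fresh hash bits.

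\textbf{Updates.} Updates enter a cache-resident head buffer and cascade downward log-structured-merge style whenever a level reaches capacity. The key ingredient is that a hashed buffer can absorb a batch of $B$ items into its destination pages without the ordered redistribution a comparison-based buffer requires: items are routed to their hash buckets in cache, and each affected bucket is written back exactly once per flush. The goal of the update analysis is to show that, with the right flush schedule, the amortized cost of propagating an item from the head buffer to its final resting place across all $L$ levels is $O(\lambda/B)$, a factor of $\log_\lambda n$ faster than the buffer tree of Theorem~\ref{buffer}.

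\textbf{Main obstacle.} The crux is precisely this update analysis: a naive LSM-style merge from level~$i$ to level~$i+1$ touches $\Theta(\lambda^{i+1})$ destination buckets per flush, which telescopes to exactly the buffer-tree cost and wastes the whole $\log_\lambda n$ factor we are trying to save. I would therefore pursue two ingredients in combination: (a)~\emph{coupling} the hash functions across consecutive levels (for instance, let $h_{i+1}$ extend $h_i$ with $\log\lambda$ fresh bits) so that a bucket of level~$i$ maps into a contiguous run of level-$i+1$ buckets, permitting merges to be implemented as streaming scans with no random re-addressing; and (b)~a flush schedule in which most items do not traverse all $L$ levels but settle into their long-term bucket after only $O(1)$ moves on average, so that the per-item work sums to $O(\lambda/B)$ rather than $O((\lambda/B)\log_\lambda n)$. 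Establishing (a) and (b) while preserving enough randomness for the bucket-size concentration is the delicate part --- this is where hashing genuinely escapes the indivisibility paradigm. Deletions via tombstones, the Las Vegas guarantee by overflow rebuild, and verification in the boundary regimes $\lambda\approx\lg\lg n$ and $\lambda\approx B$ should then follow from standard techniques.
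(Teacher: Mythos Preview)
Your query side is fine, and you have correctly located the real difficulty: a cascade of $\Theta(\log_\lambda n)$ hash-addressed levels gives the right $t_q$, but any scheme that moves \emph{whole keys} from level to level pays $\Omega(1/B)$ per key per level and lands back at the buffer-tree bound. Unfortunately, neither of your proposed fixes escapes this.

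Fix~(a) is a buffer tree with hash bits in place of key order. A full level-$i$ bucket of $B$ items fans out to $\lambda$ level-$(i{+}1)$ buckets; whether those are contiguous or randomly addressed, the write cost is $\Theta(\lambda)$ I/Os for $B$ items, i.e.\ $\Theta(\lambda/B)$ per item \emph{per level}, and summing over $\Theta(\log_\lambda n)$ levels gives $t_u = \Theta\bigl((\lambda/B)\log_\lambda n\bigr)$, not $O(\lambda/B)$. Fix~(b) is incompatible with your own capacities: levels $0,\dots,L{-}1$ together hold only $\sum_{i<L}\lambda^i B = O(\lambda^{L-1}B)=O(n/\lambda)$ items, so all but an $O(1/\lambda)$ fraction of the $n$ keys must reach level $L$, hence traverse $\Theta(L)$ levels on average. ``Settling after $O(1)$ moves'' is not a flush schedule; it is a contradiction.

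The idea you are missing is that the data structure must genuinely abandon indivisibility: as a key descends, it must become \emph{shorter}, so that more keys pack into a page and the per-level I/O cost falls to cancel the growth in the number of subproblems. The paper achieves this with a van~Emde~Boas style recursion. A $t$-gadget stores keys carrying $O(\lg t)$ bits of (page, distribution, shadow) hash; when a key is pushed into a top or bottom $\sqrt t$-subgadget, only the relevant \emph{half} of the distribution and shadow bits is retained, together with an $O(\lg t)$-bit backpointer into a local log so the discarded bits can be recovered at query time. The update recurrence is then $U(t)=O(\lg t/b)+2\,U(\sqrt t)$, and the per-level cost stays a flat $O(\lg t/b)$ because the doubling of subproblems is exactly offset by the halving of key length. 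Summed over $O(\lg\lg t)$ levels plus a base case of size $\tmin\approx\lambda$, this yields $U(t)=O\bigl(\frac{\lg t}{b}(\lg\lg t+\tmin)\bigr)$, and composing with a global buffer tree of fan-out $M^{\eps}$ gives $t_u=O(\lambda/B)$. This bit-halving with log/backpointer recovery --- not any merge schedule over atomic keys --- is what buys the $\log_\lambda n$ factor; your proposal contains no analogue of it.
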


\begin{figure}
\begin{tabular}{c|cc|cc|cc|}
&\multicolumn{2}{c|}{$\lambda=B^\eps$}
& \multicolumn{2}{c|}{$\lg \lg n \leq \log_M n$}
& \multicolumn{2}{c|}{$M=n^\eps$}\\
& $t_u$ & $t_q$
& $t_u$ & $t_q$
& $t_u$ & $t_q$
\\ \hline
Buffer Trees &  
$O(1/B^{1-\eps})$ & $O(\log_Bn)$ &
 $O(\frac{1}{B}\log_M n)$ & $\geq M^\eps \lg n$
&\parbox[c]{0.6in}{\begin{center}$O(\frac{\lg  n}{B})$\\$O(\frac{\lg \lg n}{B})$\end{center}}& 
\parbox[c]{0.8in}{\begin{center}$O(\lg n)$\\$2^{\Omega(\lg n/\lg \lg n)}$\end{center}}\\
Our Structure 
& $O(1/B^{1-\eps})$ & $O(\log_Bn)$ 
& $O(\frac{1}{B}\log_M n)$ & $O(\log M)$
& $O(\frac{\lg \lg n}{B})$ & $O(\frac{\lg n}{\lg \lg \lg n})$ \\
\end{tabular}

\caption{Selected update/query tradeoffs for buffer trees and our structure.}\label{tradeoffs}
\end{figure}

At the high end of the trade-off, for $\lambda = B^\eps$, we obtain
update time $O(1/B^{1-\eps})$ and query time $O(\log_B n)$ (See
Figure~\ref{tradeoffs}). This is the same as standard buffer trees.
Things are more interesting at the low end (fast updates), which is
the \emph{raison d'\^{e}tre} of buffer trees. Comparing to
Theorem~\ref{buffer}~\eqref{eq:buff-low}, our results are a
logarithmic improvement over buffer trees, which could achieve $t_u =
O(\tfrac{\lambda}{B} \lg n)$ and $t_q = O(\log_\lambda n)$.

Interestingly, the update time can be pushed very close to the ideal
disk transfer rate of $1/B$: we can obtain $t_u^{\min} = O(
\frac{1}{B} \cdot \max \{ \log_M n, \lg\lg n\})$.

Note that it is quite natural to restrict the update time to
$\Omega(\frac{1}{B} \log_M n)$. Unless one can break the permutation
bound, this is an inherent limitation of any data structure that has
some target order in which keys should settle after a long enough
presence in the data structure (be it the sorted order, or a
hash-based order). Since buffer trees work in the indivisible model,
they share this limitation. However, the buffer tree pays a
significant penalty in query time to achieve $t_u = O(\frac{1}{B}
\log_M n)$: from theorem~\ref{buffer}~\eqref{eq:buff-hi}, this requires $t_q \ge M^\eps \lg
n$, which is significantly more than polylogarithmic (for interesting
ranges of $M$). By contrast, our bound on the query time is still
(slightly) sublogarithmic.

If one assumes $M$ is fairly large (such as $n^\eps$), then $\lambda
\ge \lg\lg n$ becomes the bottleneck. This is an inherent limitation
of our new data structure. In this case, we can achieve $t_u =
O(\frac{\lg\lg n}{B})$ and $t_q = O(\lg n / \lg\lg\lg n)$. By
contrast, buffer trees naturally achieve $t_u = O(\frac{\lg n}{B})$
and $t_q = O(\lg n)$. With a comparable query time, our data structure
gets exponentially closer to the disk transfer rate for updates.
If we ask for $t_u = O(\frac{\lg\lg n}{B})$ in buffer trees, then, from
theorem~\ref{buffer}~\eqref{eq:buff-hi}, we must have a huge query time of $t_q =
2^{\Omega(\lg n / \lg\lg n)}$.

Our result suggests exciting possibilities in external memory data
structures. It is conceivable that, by abandoning the comparison and
indivisibility paradigms, long-standing running times of natural
problems such as predecessor search or range reporting can also be
improved.

\paragraph{Lower bounds.}
We complement our data structure with a lower bound that shows its
optimality:

\begin{theorem}
Let $\eps > 0$ be an arbitrary constant. Consider a data structure for
the membership problem with at most $n$ keys from the universe $[2n]$,
running in the cell-probe model with cells of $O(B\lg n)$ bits and a
state (cache) of $M$ bits. Assume $B\ge \lg n$ and $M \le
n^{1-\eps}$. The data structure may be randomized. Let $t_u$ be the
expected amortized update time, and $t_q$ be the query time. The query
need only be correct with probability $1- C_\eps$, where $C_\eps$ is
a constant depending on $\eps$.

If $t_u \le 1-\eps$, then $t_q = \Omega(\lg n / \lg(B\cdot t_u))$.
\end{theorem}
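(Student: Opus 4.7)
My plan is a cell-probe chronogram argument of Fredman--Saks type, tuned to the external-memory regime so that the epoch ratio matches the amortized update budget. I would set $r = \Theta(B \cdot t_u)$ and define $k = \log_r n$ epochs, with epoch $i$ performing $r^i$ random updates on keys from $[2n]$ (insertions and deletions, calibrated to keep $|S|\le n$); a single query is then issued at the end. The goal is to show $t_q = \Omega(k) = \Omega(\lg n / \lg(B\cdot t_u))$, matching the paper's upper-bound tradeoff.

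The first step is an amortization calculation. Let $L_i$ be the set of cells whose \emph{last} write in the entire update sequence occurred during epoch $i$. By the expected $t_u$-bound on amortized updates together with Markov's inequality and a union bound over the $k$ epochs, with constant probability over the input \emph{all} epochs simultaneously satisfy $|L_i| = O(r^i \cdot t_u)$. Conditioning on this good event costs only a constant factor in the per-query success probability, so we may assume it henceforth.

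The heart of the argument is the claim that a query correct with probability at least $1 - C_\eps$ must, in expectation over the input, probe at least one cell in $L_i$ for each of the $k$ epochs; summing across epochs then yields $t_q = \Omega(k)$. Heuristically, cells outside $L_i$ are independent of epoch $i$'s random choices once the memory state at the end of epoch $i$ is fixed, so a query that avoids $L_i$ can be fooled by resampling epoch $i$. The main obstacle is \emph{information migration}: a later epoch's updates may copy epoch $i$'s content into cells in $L_j$ for $j > i$, so a query could in principle recover epoch $i$ without ever probing $L_i$. I would control this by an encoding/compression argument bounding the information-theoretic flow from epoch $i$ into later-written cells against the cumulative write budget; the choice $r = \Theta(B\cdot t_u)$ is precisely calibrated so that this migration budget cannot substitute for $L_i$-probes in more than a constant fraction of the $k$ epochs. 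The cache bound $M \le n^{1-\eps}$ additionally limits cross-epoch information smuggling through the cache to $o(k)$ epochs, and the hypothesis $t_u \le 1-\eps$ is used to ensure $r < B$, so that the bound is genuinely better than a plain $B$-tree. Migration control is where I expect the main technical difficulty to lie; an alternative route via a cell-sampling argument applied at epoch granularity would also be worth attempting.
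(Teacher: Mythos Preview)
Your high-level framework is right and matches the paper's: geometrically growing epochs with ratio $\lambda = \Theta(B\cdot t_u)$, and the goal of showing that a correct query must touch a cell ``belonging to'' epoch $i$ for a constant fraction of the $\Theta(\log_\lambda n)$ epochs. The gap is exactly where you flag it: migration. Your sentence ``cells outside $L_i$ are independent of epoch $i$'s random choices once the memory state at the end of epoch $i$ is fixed'' is simply false---cells last written in later epochs may have been computed by reading cells from $W_i$---and your proposed fix (``an encoding/compression argument bounding the information-theoretic flow'') is not an argument but a hope. In particular, the decoder in any encoding scheme needs to know the contents of \emph{all} cells in $W_{<i}$ (the cells written after epoch $i$), and encoding those na\"ively costs $\Theta(\lambda^{i-1} t_u \cdot B\lg n)$ bits; with $\lambda = \Theta(B t_u)$ this is $\Theta(\lambda^i \lg n)$, which swamps the $2\lambda^i$ bits of entropy you are trying to recover.

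The paper resolves this with two ideas you are missing. First, the query position (and hence the epoch boundaries) is \emph{random}; this is essential because it lets one bound, for a randomly chosen $i$, the expected size of $W_i \cap R_{<i}$ (cells written in epoch $i$ that are subsequently \emph{read} by later epochs) by $O(\lambda^{i-1} t_u / \log_\lambda n)$---a calculation over how a fixed write-then-read pair straddles a random epoch boundary. Second, instead of encoding all of $W_{<i}$, the encoder sends only $W_i \cap R_{<i}$ together with the cache state at the end of epoch $i$; the decoder then \emph{simulates} the later epochs itself (their updates are public coins), consulting the message only when the simulation reads a cell that was written in epoch $i$. This is the trick of P\v{a}tra\c{s}cu--Demaine, and it is what makes the encoding affordable when $\lambda$ is as small as $B t_u$. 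The paper's encoding also runs $2\lambda^i$ queries (half positive, half negative) rather than one, and carefully marks, for each cell of $W_i\setminus W_{<i}$, a positive query whose first probe into that set lands on it; the sizes of these marking sets are where the hypothesis $t_u \le 1-\eps$ actually enters, via an entropy term $H_b(t_u/2)$ bounded away from $1$. Your Markov/union-bound step over all epochs simultaneously is also weaker than needed; the paper instead shows a constant \emph{fraction} of epochs are good, which suffices for the $\Omega(\log_\lambda n)$ conclusion.
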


Remember that for any desired $t_u \ge t_u^{\min} = \frac{1}{B} \cdot
\max \{ \log_M n, \lg\lg n\}$, our new data structure obtained a query
time of $t_q = O(\lg n / \lg (B\cdot t_u))$. In other words, we have
shown an \emph{optimal} trade-off for any $t_u \ge t_u^{\min}$. We
conjecture that for $t_u = o(t_u^{\min})$, the query time cannot be
polylogarithmic in $n$.

Our lower bound holds for any reasonable cache size, $M \le
n^{1-\eps}$. One may wonder whether a better bound for smaller $M$ is
possible (e.g.~proving that for $t_u = o(\frac{1}{B} \log_B n)$, the
query time needs to be super-logarithmic). Unfortunately, proving this
may be very difficult. If sorting $n$ keys in external memory were to
take time $O(n/B)$, then our data structure will work for any $t_u \ge
\Omega(\lg\lg n / B)$, regardless of $M$. Thus, a better lower bound
for small cache size would imply that sorting requires superlinear
time (and, in particular, a superlinear circuit lower bound, which
would be a very significant progress in complexity theory).

Remember that update time below $1/B$ is unattainable, regardless of
the query time. Thus, the remaining gap in understanding membership is
in the range $t_u \in \big[\frac{1}{B}, \frac{\lg\lg n}{B} \big]$.

At the high end of our trade-off, we see a sharp discontinuity between
internal memory solutions (hash tables with $t_u = t_q \approx 1$) and
buffer trees. For any $t_u < 1-\eps$, the query time blows up to
$\Omega(\log_B n)$.

The lower bound works in the strongest possible conditions: it holds
even for membership and allows Monte Carlo randomization. Note that
the error probability can be made an arbitrarily small constant by
$O(1)$ parallel constructions of the data structure. However, since we
want a clean phase transition between $t_u = 1-\eps$ and $t_u = 1$, we
mandate a fixed constant bound on the error.

The first cell-probe lower bounds for external-memory membership was
by Yi and Zhang~\cite{yi10buffer} in SODA'10. Essentially, they show
that if $t_u \le 0.9$, then $t_q \ge 1.01$. This bound was
significantly strengthened by Verbin and Zhang~\cite{verbin10buffer}
in STOC'10. They showed that for any $t_u \le 1-\eps$, then $t_q =
\Omega(\log_B n)$.

This bound is recovered as the most extreme point on our trade-off
curve. However, our proof is significantly simpler than that of Verbin
and Zhang. We also note that the technique of \cite{verbin10buffer}
does not yield better query bounds for fast updates. A lower bound for
small $t_u$ is particularly interesting given our new data structure
and the regime in which buffer trees are most appealing.

For update time $t_u \ll \frac{\lg n}{B}$, our lower bound even beats
the best known comparison lower bound. This was shown by Brodal and
Fagerberg \cite{brodal03buffer} in SODA'03, and states that $t_q =
\Omega(\lg n / \log(t_u B \lg n))$. On the other hand, in the
comparison model, it was possible to show \cite{brodal03buffer} that
one cannot take $t_u \ll \frac{1}{B} \log_M n$ (below the permutation
barrier) without a significant penalty in query time: $t_q \ge
n^{\Omega(1)}$.


\section{Upper Bound}

Our data structure is presented in a number of levels. First, in
section~\ref{sec:shrinkage}, we describe how we can map word-sized
keys into keys with $O(\log n)$ bits; a brief discussion of how
deletions can be handled using insertions also appears in this
preliminary high-level section.

In section~\ref{sec:gadget}, we then proceed to describe the core
component of our structure, called a ``\emph{gadget}.'' The gadget is
defined recursively, and a description of the recursive implementation
of the operations is presented in Section~\ref{sec:recimp}. The small
non-recursive gadget at the base of the recursion is nontrivial enough
to merit a separate description appearing in
Section~\ref{sec:base}. We present a high-level analysis of the gadget
in Section~\ref{s:analysis}; some probability arguments about the size
of gadgets and key-value-collisions are needed, which are isolated in
Section~\ref{s:prob}.

Our gadget has certain size limitations, in that it can only be used
as presented if it fits into cache. So, for larger data we use as
global recursive structure a variant of buffer trees, where we switch
to our gadgets when at levels of the recursion where the size
requirements of out gadget are met. We elaborate on this idea in
Section~\ref{s:global}, which completes the description of our
structure.

To summarize: after preprocessing (\S \ref{sec:shrinkage}), a
buffer-tree based structure is used, where the leaves are gadgets (\S
\ref{s:global}); gadgets are defined recursively (\S \ref{sec:gadget})
with a non-trivial base structure (\S \ref{sec:base}).

\subsection{Preliminary key shrinkage and deletions}\label{sec:shrinkage}

As a warm-up, we show how we can assume that keys and associated data
have $O(\lg n)$ bits. The data structure can simply log keys in an
array, ordered by insertion time. In addition, it hashes keys to the
universe $[n^2]$, and inserts each key into a buffer tree with an
index into the time-ordered array (of $O(\lg n)$ bits) as associated
data. A buffer-tree query may return several items with the same hash
value. Using the associated pointers, we can inspect the true value of
the key in the logging array. Each takes one memory access, but we
know that there are only $O(1)$ false positives with high probability
(w.h.p.), with a good enough hash function.

Deletions can be handled in a black-box fashion: add the key to the
logging array with a special associated value that indicates a delete,
and insert it into the buffer tree normally. Throughout the paper, we
will consider buffer trees with an ``overwrite'' semantics: if the
same key is inserted multiple times, with different associated values,
only the last one is relevant to the query. Thus, the query will
return a pointer to the deletion entry in the log. After $O(n)$
deletions we perform global rebuilding of the data structure to keep
the array bounded.

From now on, we assume the keys have $O(\lg n)$ bits. Let $b =
\Omega(B\lg n)$ be the number of bits in a page.

\subsection{Gadgets} \label{sec:gadget}

The fundamental building block of our structure is called a
\emph{gadget}.  A $t$-gadget stores a multiset $S$ from $([b]\times [t]
\times [t])\times [t^3]$. We refer to the components of a tuple
$x=((p,d,s),b) \in S$ as:
\begin{itemize*}
\item the \emph{page hash}, $p\in[b]$;
\item the \emph{distribution hash}, $d \in [t]$;
\item the \emph{shadow hash}, $s \in [t]$;
\item the \emph{backpointer}, $b \in [t^3]$. Taken together, the page,
  distribution, and shadow hash codes are treated as a key, with the
  backpointer being associated data.
\end{itemize*}

\paragraph{Operations.}

A $t$-gadget stores a multiset supporting two operations:
\begin{description*}
\item[$\proc{Bulk-Insert}(T)$:] Insert a multiset $T \subseteq ([b]\times
  [t]^2)\times [t^3]$ into the data structure ($S \leftarrow S \cup T$). The
  multiset is presented packed into  $O(\lceil |T| / \frac{b}{\lg t}
  \rceil)$ pages.

\item[$\proc{Query}(x)$:] Given a key $x \in [b]\times [t]^2$, return
  the (possibly empty) list of the backpointer values of all elements
  of $S$ with key value $x$. We aim for time bounds proportional to
  the number of occurrences of elements with key value $x$ in the
  multiset.

\end{description*}

\paragraph{Capacity invariants.}
Our construction will guarantee that the the number of keys stored in
a $t$-gadget is $|S| = O(bt)$ with high probability.  A $t$-gadget
will occupy $O(|S| / \frac{b}{\lg t}) = O(t\lg t)$ pages thanks to the
use of a succinct encoding. At the beginning of any operation, there
is no guarantee that any part of the $t$-gadget is in cache.
However, we will only use gadgets that can completely fit in cache,
i.e.~the entire gadget can be loaded in cache by the update algorithm
if desired.

\paragraph{A recursive construction.}
At a high level, our data structure follows the same recursive
construction used in the van Emde Boas layout
two types of $t$-gadgets: the \emph{recursive $t$-gadget}, and the
\emph{base $t$-gadget} which is used as a base case for small $t$. The
description of the base $t$-gadget is deferred to \ref{sec:base}. Here
we define the recursive $t$-gadget.

Recursive $t$-gadgets contain the following components:

\begin{itemize}
\item The \emph{log}: An array containing all the elements of $S$ in
  the order of insertion. The last block of the array is the only one
  which may be partially full and is referred to as the \emph{tail
    block} of the log.

\item The \emph{top gadget $g^T$}: A a recursive $\sqrt{t}$-gadget.

\item The \emph{bottom gadgets $g^B_i$}: An array of $\sqrt{t}$~
  $\sqrt{t}$-gadgets.

\end{itemize}

All elements of $S$ are stored in the log; furthermore, all elements
of $S$ except for those in the tail block will have a truncated
representation of their keys recursively stored in either the top
gadget or one of the bottom gadgets. Formally:

\begin{invariant} \label{inv}
Let $\MSB(\cdot)$ and $\LSB(\cdot)$ refer to the most and least
significant half of the bits of their parameter. Given an element
$x=((p,d,s),b)$ exactly one of the following holds:

\begin{enumerate}

\item \label{c1} Element $x$ is stored in the tail block of the log.

\item \label{c2} Element $x$ is stored in block $i$ of the log and
  $((p,\MSB(d),\MSB(s)),i)$ is stored in the top gadget, $g^T$. The
  element $((p,\MSB(d),\MSB(s)),i)$ is called the
  ``\emph{top-compressed}'' key.

\item \label{c3} Element $x$ is stored in block $i$ of the log and
  $((p,\LSB(d),\LSB(s)),i)$ is stored in $g^B_{\MSB(d)}$. The element
  $((p,\LSB(d),\LSB(s)),i)$ is called the ``\emph{bottom-compressed}''
  key.
\end{enumerate}

\end{invariant}

Given an element $x$ stored in a $t$-gadget, the top-compressed
version of $x$ consists of the entire page hash of $x$, and the
$\frac{1}{2}\lg t$ higher-order bits of the distribution hash and
shadow hash.  The \emph{bottom-compressed} version of $x$ is
constructed analogously using the page hash and and lower-order bits
of the distribution hash and shadow hash. In both cases the
backpointer indicates the page containing $x$ in the log of the
$t$-gadget. These compressed elements meet the size requirements for
elements that can be bulk-inserted into $\sqrt{t}$-gadget; i.e.~the
compressed key by definition is an element of $[b]\times[\sqrt{t}]^2$.
Since $|S| = \Theta(bt)$, there are at most $\Theta(t\lg t)$ pages in
the log, so the backpointer is easily within the required $[t^3]$.

The backpointer serves the following purpose: given a top or bottom
compression of some element $x$ relative to a specific and known
$t$-gadget, $x$ can be determined in $O(1)$ time by simply using the
backpointer of the compression as an index into the log of the
$t$-gadget. In this way the bits removed from the distribution and
shadow hashes of of $x$ to form its compression can be restored; as we
will see, this allows the support of ``uncompression'' as recursive
queries return.

\subsubsection{Implementation of gadget operations}\label{sec:recimp}


\paragraph{Query.}
A $\proc{Query}(x)$ operation proceeds as follows, where $x=(p,d,s)$
is the current recursive compression of the original hashed key:
\begin{itemize*}
\item Inspect the tail block of the log, and retrieve the associated
  backpointers of all occurrences of $x$ from there. This takes one
  block read, and returns all the backpointers of the data satisfying
  case~\ref{c1} of invariant~\ref{inv}.

\item Recursively call \proc{Query} in the top gadget $g^T$ with the
  top-compressed key $(p,\MSB(d),\MSB(s))$. The top gadget will return a
  set of backpointers $\{p_1, p_2, \ldots\}$, which are indexes into
  the log of the current gadget. The set of items from the log
  includes all data satisfying case~\ref{c2} of invariant~\ref{inv},
  plus possibly some {\bf false positives}, where the compressed key
  matches the query key from the perspective of the top gadget, but
  not the full key is different. For any result returned by the top
  gadget, the query inspects the key in the log (taking constant time
  for the pointer access) and verifies the lower halves of the
  distribution and shadow hash codes, $\LSB(d)$ and $\LSB(s)$. If any of
  these differ, the result is a false positive and is
  discarded. Otherwise, the result is returned along with the original
  backpointer, retrieved from the log. We will later need to bound the
  number of false positives induced by hashing and compression.

\item Recursively call \proc{Query} in the bottom gadget
  $g^B_{\MSB(d)}$ with the compressed key $(p,\LSB(d),\LSB(s))$. This
  returns all data satisfying case~\ref{c3} of Invariant~\ref{inv},
  together with some false positives that may be introduced by
  trimming the shadow hash code. For every recursive result, the query
  accesses the appropriate page in the log through the returned
  backpointer, and verifies that the lower half of the shadow hash
  code $\LSB(s)$ matches the key. If not, the result is discarded as a
  false positive. In case of a match, the original backpointer from
  the log is returned to the parent.
\end{itemize*}

Observe that recursing in the bottom gadget cannot introduce a false
positive due to the distribution hash, since it is only keys with the
same top bits as $x$ that appear in the gadget.

\paragraph{Bulk-Insert.}
An insertion proceeds as follows:

\begin{enumerate}
\item Add the inserted items at the end of the log. \label{s1}

\item If one or more than one blocks are filled as a result of step
  \ref{s1}., the recursive top compressed representation of the data
  in the newly filled blocks is computed and the resultant
  top-compressed data is bulk-inserted into the top gadget. We call
  this a \emph{little flush}. \label{s2}

\item If the top gadget contains $b \sqrt{t}$ keys as a result of
  (\ref{s2}), it is declared to be \emph{full}. It is then
  ``destroyed'' (initialized to an empty state) and all keys
  previously stored in the top gadget are bottom-compressed and
  inserted into the appropriate bottom gadget. We call this a
  \emph{big flush}. To efficiently implement this operation, the data
  to be flushed is copied from the log, where it appears contiguously
  in uncompressed form, into cache, where it all fits according to the
  capacity invariant. There the data is bucketed (for free using any
  sorting algorithm, since we are in internal memory) into $\sqrt{t}$
  groups depending on the $\MSB(d)$ field which indicates which
  recursive gadget it should be inserted into. Once the bucketing is
  complete, the data is converted into the appropriate compressed
  form, $\sqrt{t}$ recursive \proc{Bulk-Insert} operations are
  executed in the bottom gadgets.
\label{s3}
\end{enumerate}

Note that this procedure enforces Invariant~\ref{inv}, by ensuring
each item is either in the tail block of the log, in the top gadget,
or in a bottom gadget. The capacity requirement of the top gadget is
explicitly enforced. The capacity requirement of any bottom gadget
holds w.h.p; this is argued in Section \ref{s:prob}.

\subsubsection{Base case} \label{sec:base}
We switch to the base case of the recursion when $t\le \tmin$, for a
parameter $\tmin$ to be determined (see \S\ref{s:global}). To achieve
the full range of our trade-offs, we need to use a different
non-recursive construction for these small gadgets. Such a gadget
maintains a single \emph{buffer page} with the last $\le \frac{b}{\lg
  t}$ inserted keys.  The rest of the keys are simply stored in a hash
table (e.g.~collision chaining) addressed by the \emph{page
  hash}. This is the one place where the page hash is used. With a
succinct representation, the table occupies $O(|S| / \frac{b}{\lg t})
= O(t \lg t)$ pages.

$\proc{Query}(x)$ inspects the buffer page and only one page of the
hash table w.h.p. (since we have assumed $B = \Omega(\lg n)$, and the
maximal chain is $O(\lg n / \lg\lg n)$ w.h.p.). Thus, a query takes
time $O(1)$. $\proc{Update}$ simply appends keys to the buffer
page. When the buffer page fills, all keys are inserted into the hash
table. This operation may need to touch all $O(t\lg t)$ pages, since
the new keys are likely to have hash codes that are spread out;
however the cost can not exceed $O(t \lg t)$ since the capacity
invariants ensure the whole gadget fits in memory.

\subsubsection{Analysis of the $t$-gagdet} \label{s:analysis}

In this section, we analyze the performance of gadgets, delaying the
probabilistic analysis to the next section.

\paragraph{Space usage.} 
Though a key stored in a gadget may appear in $O(\lg\lg n)$ recursive
gadgets, the repeated compression results in the space used by all
compressed occurrences of the key is dominated by the top-level
representation of the key.  Formally, a single key in a $t$ gadget
occupies $O(\lg t)$ bits of space and may appear recursively in at
most one $\sqrt{t}$-gadget. Thus, the space per key is given by the
recurrence $S(t)\leq S(\sqrt{t})+O(\lg t)$, which solves to $O(\lg t)$
bits. Overall, a $t$-gadget storing $n$ keys uses $O(n)$ words of
space.

\paragraph{Update cost.}
Over its lifetime in a $t$-gadget, an element will be appended to the
log once, participate in a little flush (being inserted into the top
gadget) at most once, and participate in a a big flush (being moved
into a bottom gadget) at most once.

We begin our analysis with the cost of \proc{Bulk-Insert}, excluding
recursive calls. A key participates in a \proc{Bulk-Insert} operation
if it is one of the inserted items or participates in a big or little
flush. If $k$ is the number of participating items, the actual running
time excluding recursive calls, is $O(1+\frac{k\lg t}{b})$. Since
$\frac{k\lg t}{b}$ is the (fractional) number of blocks occupied by a
single element in a $t$-gadget, this is the fastest possible (linear
time). Let us briefly explain how this is achieved for each of the
three steps presented in the description of the \proc{bulk-insert}
operation. The first step, inserting at the end of the buffer, can be
done efficiently since we required that the inserted data is already
presented packed into pages. In the little flush, one or more blocks
of data from the log need to be copied and converted into
top-compressed form and delivered to single recursive
\proc{bulk-insert}; this can be done with a simple scan. The big
flush, by definition is performed when the top $\sqrt{t}$-buffer is
full, containing $b\sqrt{t}$ keys; the actual cost (including the
calling of but excluding the execution of the recursive bulk
insertions) is $\Theta(\sqrt{t} \log \sqrt{t}+\sqrt{t})$, with the
$\sqrt{t}$ being a lower-order term due to the $\sqrt{t}$ recursive
calls to \proc{Bulk-Insert}.

We now bound the recursive cost by amortizing. To cover the constant
additive term, we assign an amortized $O(1)$ credit to every
\proc{Bulk-Insert} operation. The credit for the recursive call in the
top gadget can be paid because a recursive call is only made when we
fill a page of the log. The credit for the recursive calls in the
bottom gadgets is a lower order term compared to sorting the entire
top gadget.

Now we are left with a cost of $O(\frac{k\lg t}{b})$ for a
\proc{Bulk-Insert} operation in which $k$ keys participate. This
translates into an amortized cost of $O(\frac{\lg t}{b})$ per key. Let
$U(t)$ be the total cost charged to a key by a $t$-gadget, including
recursive calls. This is described by the recurrence:

\[ U(t) = O(\tfrac{\lg t}{b}) + 2 \cdot U(\sqrt{t});
\qquad\qquad 
U(t \leq \tmin) = O(\tfrac{t\lg t}{b})
\]

Observe that on each level of the recurrence, we have $2^i$ terms of
$O(\frac{1}{b} \lg(t^{-2^i}))$, i.e.~a constant total cost of
$O(\frac{\lg t}{b})$ per level. This is very intuitive: at each level,
our key is broken into many components, but their total size stays
$\lg t$ bits. Since the cost is proportional to the bit complexity of
the key, the cost of each level is constant; e.g.~at the top level you
recurse twice on keys of half the size. This property of the data
structure is the most crucial element in obtaining our upper bound. It
only possible due to our compression of the keys; without compression,
i.e.~without violating indivisibly, the cost would increase
geometrically at each level instead of remaining unchanged.

The recursion for $U(t)$ solves as follows: the recursion has
$O(\lg\lg t)$ levels at a cost of $O(\frac{\lg t}{b})$ per level. In
the base case, the recursion has $\lg t / \lg \tmin$ leaves, each of
cost $O(\frac{1}{b} \tmin \lg \tmin)$. Thus the total cost is $U(t)
= \frac{\lg t}{b} \cdot O(\lg\lg t + \tmin)$.

\paragraph{Query cost.}
The query time is proportional to the number of (true positive)
results returned. Let $Q(t)$ be the cost per result of a query in a
$t$-gadget. We first note that $Q(t)$ is at least $1$, because any
result has to be looked up in the log, in order to check whether it is a
true positive. The query cost is proportional to the
number of gadgets traversed, and is described by the easy recursion:
\[ Q(t) = 1 + 2\cdot Q(\sqrt{t});
\qquad \qquad Q(\tmin) = 1 \] 
The number of gadgets grows exponentially with the level, and is
dominated by the base case. The total cost is therefore $Q(t) = O(\lg
t / \lg \tmin)$.

For each false positive encountered throughout the query, there is an
additional cost of at most $O(\lg\frac{\lg t}{\lg \tmin})$. Indeed,
each key is in at most $O(\lg\frac{\lg t}{\lg\tmin})$ levels at a
time, and we need to do $O(1)$ work per level: we go to the
appropriate page in the log to verify the identity of the key and
retrieve its data. We will show later that the total overhead due to
false positives is $O(Q(t))$ w.h.p.

\subsubsection{Probabilistic Analysis} \label{s:prob}

\paragraph{Capacity bounds.}
We first prove that no $t$-gadget $g$ receives more than $O(bt)$ keys
w.h.p. Note that shadow and page hashes are irrelevant to this
question, and we only need to analyze distribution hashes. For now,
assume that our hashing is truly random.

Let $g'$ be the lowest ancestor of $g$ in the recursion tree which is
a top-recursive gadget of its parent; say $g'$ is a $t'$-gadget. We
conventionally interpret the root gadget to be a top-recursive gadget,
so $g'$ is always defined. Note that $g'$ could be $g$. Remember that
\proc{Bulk-Insert} enforces a worst-case capacity bound on any top
gadget by the big flush operation, so the number of keys of $t'$ is at
most $b\cdot t'$ in the worst case. A key from a $t'$-gadget ends up
in a specific bottom gadget only if the first half of its distribution
hash matches the identity of the bottom gadget. Recursively, keys that
end up in a specific grandchild $\sqrt[4]{t}$-gadget have the same
prefix of $\frac{3}{4} \lg (t')$ bits of the distribution hash. Since
$g'$ is the lowest ancestor of $g$ that is a top-recursive gadget, all
keys of $g'$ that end up in $g$ do so through bottom recursion,
i.e.~they will all have a common prefix of $\lg(t') - \lg t$ bits.
Therefore, analyzing the number of keys of $g'$ that end up in $g$ is
a standard balls-in-bins problem with the $bt'$ balls of $g'$ being
distributed uniformly into $\frac{t'}{t}$ bins. The expected number of
balls landing in gadget $g$ is $tb$. Since $b = \omega(\lg n)$, the
Chernoff bound says that we have at most $O(tb)$ keys in the bin with
high probability in $n$.

\paragraph{False positives during \proc{Query}.}
We now switch to analyzing the number of false positives encountered
by \proc{Query}(x).  We count a false positive only once, in the first
level of recursion where it is introduced. As noted above, a false
positive introduced in a $t$-gadget induces an additive cost of
$O(\lg\frac{\lg t}{\lg \tmin})$ on the query time.

We claim that for any $t$, the number of the false positives
introduced in all $t$-gadgets that the query traverses is $O(\log_t
n)$ w.h.p. Thus, the total cost on one level of the recursion is
$O(\frac{\lg n}{\lg t} \cdot \lg\frac{\lg t}{\lg \tmin})$. At the
$i$-th level of the recursion, we have $\lg t = 2^i\lg \tmin$, so the
total cost is:
\[ \sum_i O\Big( \frac{\lg n}{\lg(2^i \lg tmin)} \cdot
                  \lg\frac{\lg(2^i \lg\tmin)}{\lg \tmin} \Big) 
= O\Big(\frac{\lg n}{\lg \tmin} \Big) \sum i 2^{-i} 
= O\Big( \frac{\lg n}{\lg \tmin} \Big)
= O(Q(t)).\]

Thus, the total cost due to false positives is $O(Q(t))$ w.h.p.

We must now prove our claim that the false positives introduced in all
$t$-gadgets on the query path is $O(\log_t n)$ w.h.p. There are
$\log_t n$ $t$-gadgets on the path of the query, and each cares about
a disjoint interval of $\lg t$ bits of the distribution and shadow
hash codes. For the analysis, we imagine fixing a growing prefix of
the distribution and shadow hashes, in increments of $\lg t$ bits. At
every step, the fixing so far decides which keys land in the next
$t$-gadget. Among these, only the most recent $\sqrt{t} \cdot b$ can
be in the top gadget at query time. One of these keys is a false
positive iff it is different from the query, yet its page hash and the
top half of its distribution and shadow hash codes match the query
key. These hash codes consist of $\lg b + 2\lg \sqrt{t}$ random bits,
so this event happens with probability $\frac{1}{bt}$ independently
for each of the $\sqrt{t} \cdot b$ keys. For bottom compression, a key
can only be a false positive iff it is different from the query, yet
its page hash, full distribution hash, and the bottom half of the
shadow hash match the query. These hash codes have $\lg b +
\frac{3}{2} \lg \sqrt{t}$ random bits together, so a key is a false
positive in a bottom recursion with probability $\frac{1}{b
  t^{3/2}}$. Over the path of the query through $\log_t n$ different
$t$-gadgets, there are $b\sqrt{t} \cdot \log_t n$ keys that could
become false positives through top recursion (each independently with
probability $\frac{1}{bt}$), and w.h.p.~$bt \cdot \log_t n$ keys that
could become false positive through bottom recursion (each
independently with probability $\frac{1}{b t^{3/2}}$). The expected
number of false positives among all $t$-gadgets is therefore
$O(\frac{\log_t n}{\sqrt{t}})$. The Chernoff bound (in the
Poisson-type regime) shows that the number of false positives does not
exceed $O(\log_t n)$ with probability $(1/\sqrt{t})^{\Omega(\log_t n)}
= 2^{-\Omega(\lg n)}$, i.e.~with high probability in $n$.

Since our analysis only relies on Chernoff bounds, it holds even with
weaker hash functions that satisfy such bounds. The main requirement
for the hash function is that it satisfy Chernoff-type concentration
even among the set of keys that share a certain prefix of the hash
code. This is true of $k$-independent hash functions, since a
$k$-independent distribution remains $k$-independent if we fix a
prefix of the bits of the outcome. Thus, a $\Theta(\lg n)$-independent
hash function suffices for our data structure, considering Chernoff
bounds with limited independence given by \cite{schmidt95chernoff}.
These hash functions can be represented in $O(\lg n)$ words, which
fits in cache, and therefore can be evaluated without I/O cost.

\subsection{Putting Everything Together} \label{s:global}

We now describe how gadgets are used to make a dictionary structure
that supports \proc{Insert} and \proc{Search} operations. Recall that
the capacity invariant of our gadget requires that that the total size
of a gadget is smaller than the cache size, so a single gadget in and
of itself cannot be the target data structure.

Our data structure is globally organized like the original buffer tree,
except that each node is implemented by a gadget to support fast
queries. At the global level, only comparisons are needed.

Formally, we have a tree with branching
factor $M^\eps$. Each node can store up to $M$ keys. These keys are stored in a plain array, but also in an $\frac{M}{b}$-gadget. The
gadgets are independent random constructions (with different
i.i.d.~hash functions). The $O(\log n)$-bit key values are partitioned into distribution, shadow, and block hash fields when inserting and searching the gadgets in the nodes.  When the capacity of the node fills, its keys
are distributed to its children. Since $M\ge B^{1+\eps}$ (the tall
cache assumption), this distribution is I/O efficient, costing
$O(\frac{1}{B})$ per key.

The total cost for inserting a key is:
\[ O(\tfrac{1}{B} \log_M n) + U(\tfrac{M}{b}) \cdot \log_M n
=  O(\tfrac{\log_M n}{B}) + O(\log_M n) \cdot
  \tfrac{\lg M}{B\lg n} \cdot (\lg\lg M + \tmin \lg \tmin)
\]
Thus $t_u = O(\tfrac{1}{B}) \cdot \big( \log_M n + \lg\lg M + \tmin
\lg \tmin \big)$.  Note that $\log_M n + \lg\lg M = \Theta(\log_M n +
\lg\lg n)$.  For any $\lambda\ge \log_M n + \lg\lg n$, we can achieve
$t_u = O(\lambda / B)$ by setting $\tmin \lg \tmin = \lambda$.

The cost of querying a key is $Q(\frac{M}{b}) \cdot O(\log_M n)
= O( \frac{\lg M}{\lg \tmin} \cdot \log_M n) = O(\frac{\lg n}{\lg
  \tmin}) = O(\log_\lambda n)$.

The total space of our data structure is linear. Each key is stored in
only one top-level gadget at a time, and, as we argued
in~\ref{s:analysis}, these are have linear size (in words).

\bibliographystyle{alpha} 
\bibliography{general}

\appendix

\section{Lower Bounds}


Formally, our model of computation is defined as follows. The update
and query algorithms execute on a processor with $M$ bits of state,
which is preserved from one operation to the next. The memory is an
array of cells of $O(B\lg n)$ bits each, and allows random access in
constant time. The processor is nonuniform: at each step, it decides
on a memory cell as an arbitrary function of its state. It can either
write the cell (with a value chosen as an arbitrary function of the
state), or read the cell and update the state as an arbitrary function
of the cell contents.

We first define our hard distribution. Choose $S$ to be a random set
of $n$ keys from the universe $[2n]$, and insert the keys of $S$ in
random order. The sequence contains a single query, which appears at a
random position in the second half of the stream of operations. In
other words, we pick $t \in [\frac{n}{2}, n]$ and place the query
after the $t$-th update (say, at time $t + \frac{1}{2}$). 

Looking back in time from the query, we group the updates into epochs
of $\lambda^i$ updates, where $\lambda$ is a parameter to be
determined.  Let $S_i \subset S$ be the keys inserted in epoch $i$;
$|S_i| = \lambda^i$. Let $\imax$ be the largest value such that
$\sum_{i=1}^\imax \lambda^i \le \frac{n}{2}$ (remember that there are
at least $n/2$ updates before the query). We always construct $\imax$
epochs. Let $\imin$ be the smallest value such that $\lambda^{\imin-1}
\ge M$; remember that $M < n^{1-\eps}$ was the cache size in bits. Our
lower bound will generally ignore epochs below $\imin$, since most of
those elements may be present in the cache. Note $\imax - \imin \ge
\eps \log_\lambda n - O(1)$.

We have not yet specified the distribution of the query. Let $\Dno$ be
the distribution in which the query is chosen uniformly at random from
$[2n] \setminus S$.  Let $\calD_i$ be the distribution in which the
query is uniformly chosen among $S_i$. Then, $\Dyes = \frac{1}{\imax -
  \imin + 1} (\calD_\imin + \dots + \calD_\imax)$. Observe that
elements in smaller epochs have a much higher probability of being the
query. We will prove a lower bound on the mixture $\frac{1}{2} (\Dyes
+ \Dno)$. Since we are working under a distribution, we may assume the
data structure is deterministic by fixing its random coins
(nonuniformly).

\begin{observation}
Assume the data structure has error $C_\eps$ on $\frac{1}{2} (\Dyes +
\Dno)$. Then the error on $\Dno$ is at most $2C_\eps$. At least half
the choices of $i \in \{ \imin, \dots, \imax \}$ are \emph{good} in the
sense that the error on $\calD_i$ is at most $4 C_\eps$.
\end{observation}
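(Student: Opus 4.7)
Both parts of this observation follow from linearity of expectation and Markov's inequality; there is no real obstacle, and I would organize the proof as a short two-line averaging argument. Let $e(\mathcal{D})$ denote the error probability of the (already-derandomized) data structure when the input instance is drawn from distribution $\mathcal{D}$. Because the data structure's coins have been fixed nonuniformly just before the observation, $e(\cdot)$ is purely an expectation over the random input, so for any convex combination of input distributions the error is the same convex combination of the individual errors.

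The first claim is then immediate: the hypothesis gives $\tfrac{1}{2}\bigl(e(\Dyes) + e(\Dno)\bigr) = e\bigl(\tfrac{1}{2}(\Dyes + \Dno)\bigr) \le C_\eps$, and since $e(\Dyes) \ge 0$ we conclude $e(\Dno) \le 2 C_\eps$. For the second claim I would expand $\Dyes$ using its definition to write
$$ e(\Dyes) = \frac{1}{\imax - \imin + 1} \sum_{i=\imin}^{\imax} e(\calD_i), $$
and observe that the left-hand side is bounded by $2 C_\eps$ by the same derivation as above. Markov's inequality applied to the uniform average on the right then says that strictly fewer than half of the indices $i \in \{\imin, \dots, \imax\}$ can have $e(\calD_i) > 4 C_\eps$; equivalently, at least half of them are good.

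The only thing I would be careful about is that the two halves of the mixture $\tfrac{1}{2}(\Dyes + \Dno)$ correspond to genuinely different instance distributions (yes-instances versus no-instances), not just different labels on a common distribution. Because the algorithm's randomness has been fixed in advance, however, $e(\cdot)$ remains a bona fide expectation over the input alone, so the convex-combination identity holds without qualification and the entire argument is just two invocations of averaging.
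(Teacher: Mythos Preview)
Your argument is correct and is exactly the standard averaging/Markov reasoning the paper is relying on; the paper states this as an observation without proof, and your two-step derivation (linearity of error over convex combinations, then Markov on the uniform average defining $\Dyes$) is the intended justification. One tiny wording nit: Markov only gives that \emph{at most} half the indices can have $e(\calD_i) > 4C_\eps$, not strictly fewer, but that already yields ``at least half are good,'' so the conclusion stands.
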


We will now present the high-level structure of our proof. We focus on
some epoch $i$ and try to prove that a random query in $\Dno$ reads
$\Omega(1)$ cells that are somehow ``related'' to epoch $i$.  To
achieve this, we consider the following encoding problem: for $k =
\lambda^i$, the problem is to send a random bit vector $A[1\twodots
  2k]$ with exactly $k$ ones. The entropy of $A$ is $\log_2
\binom{2k}{k} = 2k - O(\lg k)$ bits.

The encoder constructs a membership instance based on its array $A$
and public coins (which the decoder can also see). It then runs the
data structure on this instance, and sends a message whose size
depends on the efficiency of the data structure. Comparing the message
size to the entropy lower bound, we get an average case cell-probe
lower bound.

The encoder constructs a data structure instance as follows:
\begin{itemize*}
\item Pick the position of the query by public coins. Also pick
  $S^\star = S \setminus S_i$ by public coins.

\item Pick $Q = \{ q_1, \dots, q_{2k} \}$ a random set of $k$ keys
  from $[2n] \setminus S^\star$, also by public coins.

\item Let $S_i = \{ q_j \mid A[j] = 1\}$, i.e.~the message is encoded
  in the choice of $S_i$ out of $Q$.

\item Run every query in $Q$ on the memory snapshot just before the
  position chosen in step 1.
\end{itemize*}

The idea is to send a message, depending on the actions of the data
structure, that will also allow the decoder to simulate the
queries. Note that, up to the small probability of a query error, this
allows the decoder to recover $A$: $A[j]=1$ iff $q_j \in S$.

It is crucial to note that the above process generates an $S$ that is
uniform inside $[2n]$. Furthermore, each $q_j$ with $A[j]=0$ is
uniformly distributed over $[2n] \setminus S$. That means that for
each negative query, the data structure is being simulated on
$\Dno$. For each positive query, the data structure is being simulated
on $\calD_i$.

Let $R_i$ be the cells read during epoch $i$, and $W_i$ be the cells
written during epoch $i$ (these are random variables). We will use the
convenient notation $W_{<i} = \bigcup_{j=0}^{i-1} W_j$. 

The encoding algorithm will require different ideas for $t_u = 1-\eps$
and $t_u=o(1)$.

\subsection{Constant Update Time}

We now give a much simpler proof for the lower bound of
\cite{verbin10buffer}: for any $t_u \le 1-\eps$, the query time is
$t_q = \Omega(\log_B n)$. 

We first note that the decoder can compute the snapshot of the memory
before the beginning of epoch $i$, by simply simulating the data
structure ($S_{>i}$ was chosen by public coins). The message of the
encoder will contain the cache right before the query, and the address
and contents of $W_{<i}$. To bound $|W_{<i}|$ we use:

\begin{observation}  \label{obs:wi}
For any epoch $i$, $\E[|W_i|] \le \lambda^i t_u$.
\end{observation}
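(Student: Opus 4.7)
\begin{proof*}
The plan is to reduce the observation to the amortized guarantee on the total update time, using the randomness of the query position $t$ to convert a global average into a per-epoch bound.

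First I would note that each individual cell-write during the execution of an update contributes $1$ to that update's running time, so if $T_j$ denotes the (random) running time of the $j$-th update, then
\[ |W_i| \;\le\; \sum_{j\,\in\,\mathrm{epoch}_i} T_j. \]
The assumption that $t_u$ is the expected amortized update time means precisely that the expected total running time over all $n$ updates satisfies $\sum_{j=1}^{n} \E[T_j] \le n\cdot t_u$, where the expectation is over the data structure's coins and the random input distribution (a random $S\subset[2n]$ inserted in random order).

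Next I would introduce the randomness over $t$. Since epoch $i$ consists of $\lambda^i$ consecutive updates whose position shifts with $t\in[\tfrac{n}{2},n]$, each fixed update index $j$ lies in $\mathrm{epoch}_i$ for at most $\lambda^i$ different values of $t$, hence $\Pr_t[j\in\mathrm{epoch}_i]\le \lambda^i/(n/2)$. By linearity of expectation,
\[ \E[|W_i|] \;\le\; \sum_{j=1}^{n} \Pr_t[j\in\mathrm{epoch}_i]\cdot \E[T_j]
  \;\le\; \frac{\lambda^i}{n/2}\sum_{j=1}^n \E[T_j]
  \;\le\; \frac{\lambda^i}{n/2}\cdot n\, t_u \;=\; 2\lambda^i t_u, \]
which gives the claimed bound up to a harmless constant factor (absorbable, since all subsequent uses of this observation are in asymptotic estimates). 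If one really wants $\lambda^i t_u$ without the factor of $2$, one can instead redefine epochs so that $t$ ranges over an interval of exactly $n$ positions, or reinterpret $t_u$ as the amortized bound on any prefix of the update sequence; either tweak absorbs the slack.

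The main conceptual obstacle is precisely that ``amortized'' does not a priori bound the expected cost on an arbitrary window of $\lambda^i$ consecutive updates: an adversarial scheduling of work could pile all the cost into one short window. The point is that the hard distribution randomizes the position of the query, so epoch $i$ is itself a uniformly random window (of length $\lambda^i$) inside a sequence of $\Theta(n)$ updates, and the amortized-total bound divides evenly across windows in expectation. Once this is recognized, the argument is a one-line application of linearity of expectation as above.
\end{proof*}
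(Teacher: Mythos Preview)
Your argument is correct and is essentially the paper's own proof, just written out more carefully: the paper also invokes the random placement of the query to turn the amortized bound into an expected bound on a random window of length $\lambda^i$, and your linearity-of-expectation computation is exactly how that informal sentence is made rigorous. The factor of $2$ you flag is real (the paper glosses over it by writing ``its expected cost is $\lambda^i t_u$''), and, as you note, it is harmless since every downstream use of the observation is inside an $O(\cdot)$.
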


\begin{proof}
We have a sequence of $n$ updates, whose average expected cost (by
amortization) is $t_u$ cell probes. Epochs $i$ is an interval of
$\lambda^i$ updates. Since the query is placed randomly, this interval
is shifted randomly, so its expected cost is $\lambda^i t_u$.
\end{proof}

We start by defining a crucial notion: the footprint $\foot_i(q)$ of
query $q$, relative to epoch $i$. Informally, these are the cells that
query $q$ would read if the decoder simulated it ``to the best of its
knowledge.'' Formally, we simulate $q$, but whenever it reads a cell
from $W_i \setminus W_{<i}$, we feed into the simulation the value of
the cell from before epoch $i$. Of course, the simulation may be
incorrect if at least one cell from $W_i \setminus W_{<i}$ is
read. However, we insist on a worst-case time of $t_q$ (remember that
we allow Monte Carlo randomization, so we can put a worst-case bound
on the query time). Let $\foot_i(q)$ be the set of cells that the
query algorithm reads in this, possibly bogus, simulation.

Note that the decoder can compute $\foot_i(q)$ if it knows the cache
contents at query time and $W_{<i}$: it need only simulate $q$ and
assume optimistically that no cell from $W_i \setminus W_{<i}$ is
read.

For some set $X$, let $\foot_i(X) = \bigcup_{q\in X} \foot_i(q)$.  Now
define the footprint of epoch $i$: $F_i = \big( \foot_i(S_i) \cup W_i
\big) \setminus F_{<i}$. In other words, the footprint of an epoch
contains the cells written in the epoch and the cells that the
positive queries to that epoch would read in the decoder's optimistic
simulation, but excludes the footprints of smaller epochs. Note that
$F_{<i} = W_{<i} \cup \big( \bigcup_{j=0}^{i-1} \foot_j(S_j) \big)$.

We are now ready for the main part of our proof. Let $H_b(\cdot)$ be the
binary entropy.

\begin{lemma}  \label{lem:const-tu}
Let $i \in \{ \imin, \dots, \imax \}$ be good and $k = \lambda^i$. Let
$p$ be the probability over $\Dno$ that the query reads a cell from
$F_i$. We can encode a vector of $2k$ bits with $k$ ones by a message
of expected size: $O(\lambda^{i-1} t_q \cdot B \lg n) + 2k \cdot
\big[ H_b(\frac{1-\eps}{2}) + H_b(\frac{p}{2}) + H_b(3C_\eps) \big]$.
\end{lemma}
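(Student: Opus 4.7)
I would build an encoder/decoder pair for the $2k$-bit vector $A$ (Hamming weight exactly $k$) whose expected message length matches the bound. The encoder first sends a snapshot: the cache at query time plus the addresses and contents of every cell in $F_{<i}$. Using $|\foot_j(S_j)|\le \lambda^j t_q$ and Observation~\ref{obs:wi}, $\E|F_{<i}|=O(\sum_{j<i}\lambda^j t_q)=O(\lambda^{i-1} t_q)$, so this snapshot fits in $O(\lambda^{i-1} t_q B\lg n)$ bits, which also absorbs the $M\le\lambda^{\imin-1}B\le\lambda^{i-1}B$ cache bits. From public coins the decoder reconstructs the pre-epoch-$i$ memory and overlays the transmitted $F_{<i}$ values to form a \emph{pseudo-state} that coincides with the true query-time memory on every cell outside $W_i\setminus F_{<i}\subseteq F_i$. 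A standard coupling argument shows that simulating $q_j$ on the pseudo-state matches the true execution iff the simulated probe path avoids $W_i\setminus F_{<i}$; call such $j$ \emph{matching}, otherwise \emph{diverging}. Let $B[j]$ be the simulated answer; on matching $j$, $B[j]$ equals the true data-structure output on $q_j$.

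\textbf{Corrections.} The encoder additionally sends three subsets of $[2k]$, each encoded via $\log\binom{2k}{m}\le 2k H_b(m/(2k))$, together with the bit $A[j]$ on every index that appears in any of them (this extra cost is subsumed by the binomial bound). The set $D_{\mathrm{neg}}$ flags diverging negatives: a negative $q_j\in\Dno$ can only diverge if $\foot^\star(q_j)\cap F_i\ne\emptyset$, which by hypothesis has probability $\le p$, so $\E|D_{\mathrm{neg}}|\le pk$ and $D_{\mathrm{neg}}$ costs $2k H_b(p/2)$ bits. The set $D_{\mathrm{pos}}$ flags diverging positives; I will show $\E|D_{\mathrm{pos}}|\le (1-\eps)k$ by charging each diverging positive to a distinct ``fresh'' cell of $W_i\setminus F_{<i}$ and invoking $\E|W_i|\le\lambda^i t_u\le (1-\eps)k$ from Observation~\ref{obs:wi}; this costs $2k H_b((1-\eps)/2)$ bits. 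The set $E$ flags matching queries on which the data structure errs: the good-$i$ assumption gives $\Pr[E\mid\mathrm{neg}]\le 2C_\eps$ and $\Pr[E\mid\mathrm{pos}]\le 4C_\eps$, averaging to $\le 3C_\eps$, yielding $2k H_b(3C_\eps)$ bits. The decoder computes $B$ by simulating each $q_j$ against the pseudo-state and then flips the bits on $D_{\mathrm{neg}}\cup D_{\mathrm{pos}}\cup E$ to recover $A$; summing all pieces yields the bound in the lemma.

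\textbf{Main obstacle.} The most delicate step is $\E|D_{\mathrm{pos}}|\le (1-\eps)k$. A diverging positive $q_j$ must read at least one cell of $W_i\setminus F_{<i}$, and there are only $\le (1-\eps)k$ such cells in expectation; but distinct queries could hit the same cell, so some care is required in the charging. My plan is to associate each diverging positive $q_j\in S_i$ with the update in epoch~$i$ that inserted $q_j$: that update wrote at most $t_u\le 1-\eps$ cells of $W_i$ on average, and these writes together account for all of $W_i$; the fresh cell read by $q_j$'s query should be attributable to one of $q_j$'s own insertion writes (otherwise the query gains no information specific to $q_j$'s membership and would behave like a negative query, contradicting the fact that $j$ is a diverging positive). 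A Chernoff concentration argument then upgrades the expectation into a w.h.p.\ bound, and the one-bit $A[j]$ overheads together with any Chernoff slack are absorbed into the displayed $H_b$ terms via $\log\binom{2k}{m}\le 2k H_b(m/(2k))$.
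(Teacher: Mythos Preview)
Your framework is the same as the paper's: send the cache plus the addresses and contents of $F_{<i}$, then correct the decoder's simulation via a few subsets of $[2k]$. Your bounds for the snapshot, for the diverging negatives (the paper's set $X$), and for the Monte Carlo errors are all fine and match the paper.

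The gap is precisely the step you flagged as the ``main obstacle.'' Your plan is to send the set $D_{\mathrm{pos}}$ of \emph{all} diverging positives and to argue $\E|D_{\mathrm{pos}}|\le(1-\eps)k$ by charging each such $q_j$ to the cells written during $q_j$'s own insertion. This charging is unjustified and in fact false in general: in a buffered data structure, inserting $q_j$ may only touch a buffer page, while $q_j$'s actual location is later written during the insertion of some \emph{other} element (a flush). The query for $q_j$ will then read a cell of $W_i\setminus F_{<i}$ that has nothing to do with $q_j$'s own insertion writes. Your fallback sentence (``otherwise the query gains no information specific to $q_j$'s membership'') is circular: a cell written during another element's update can perfectly well encode $q_j$'s membership. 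Since many positive queries can diverge on the \emph{same} cell of $W_i\setminus F_{<i}$, $|D_{\mathrm{pos}}|$ can be as large as $k$, making its cost $2kH_b(\tfrac12)=2k$ and killing the encoding.

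The paper avoids this entirely by \emph{not} sending the diverging positives. Instead, for each cell $c\in W_i\setminus F_{<i}$ it picks one positive $q\in S_i$ with $c\in\foot_i(q)$ (if any) and sends the resulting marked set $M$; trivially $|M|\le|W_i|$, so $\E|M|\le(1-\eps)k$, which is where the $H_b(\tfrac{1-\eps}{2})$ term comes from. The decoder can compute $\foot_i(M)$ from the snapshot, and the key claim is: any unmarked positive that diverges must first read a cell of $\foot_i(M)\setminus F_{<i}$ (because the fresh cell it reads marked \emph{some} query in $M$, hence lies in $\foot_i(M)$). So the decoder declares ``positive'' whenever the simulated path enters $\foot_i(M)\setminus F_{<i}$, and otherwise the simulation is faithful. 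This marking trick---bounding a set of queries by the number of \emph{cells} rather than by the number of queries that touch them---is the idea your argument is missing.
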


Before we prove the lemma, we show it implies the desired lower
bound. Choose $\lambda = B\lg^2 n$, which implies $\lambda^{i-1} t_q
\cdot B\lg n = o(\lambda^i) = o(k)$. Note that $H_b(\frac{1-\eps}{2})$
is a constant bounded below $1$ (depending on $\eps$). On the other
hand $H_b(3C_\eps) = O(C_\eps \lg \frac{1}{C_\eps})$. Thus, there
exist a small enough $C_\eps$ depending on $\eps$ such that
$H_b(1-\eps) + H_b(3C_\eps) < 1$. Since the entropy of the message is
$2k-O(\lg k)$, we must have $H_b(1-\eps) + H_b(3C_\eps) + H_b(p) \ge 1
- o(1)$. Thus $H_b(\frac{p}{2}) = \Omega(1)$, so $p=\Omega(1)$.

We have shown that a query over $\Dno$ reads a cell from $F_i$ with
constant probability, for any good $i$. Since half the $i$'s are good
and the $F_i$'s are disjoint by construction, the expected query time
must be $t_q = \Omega(\log_\lambda n) = \Omega(\log_B n)$.

\paragraph{Proof of Lemma \ref{lem:const-tu}.}
The encoding will consist of:
\begin{enumerate*}
\item The cache contents right before the query. 

\item The address and contents of cells in $F_{<i}$. In particular
  this includes $W_{<i}$, so the decoder can compute $\foot_i(q)$ for
  any query.

\item Let $X\subseteq Q\setminus S_i$ be the set of negative queries
  that read at least one cell from $F_i$. We encode $X$ as a subset of
  $Q$, taking $\log_2 \binom{2k}{|X|} + O(\lg k)$ bits. 

\item For each cell $c \in W_i \setminus F_{<i}$, mark some
  \emph{positive} query $q\in S_i$ such that $c \in \foot_i(q)$. Some
  cells may not mark any query (if they are not in the footprint of
  any positive query), and multiple cells may mark the same
  query. Let $M$ be the set of marked queries. We encode $M$ as a
  subset of $Q$, taking $\log_2 \binom{2k}{|M|} \le \log_2
  \binom{2k}{|W_i|}$ bits.

\item The set of queries of $Q$ that return incorrect results (Monte
  Carlo error).
\end{enumerate*}

\noindent
The decoder immediately knows that queries from $X$ are negative, and
queries from $M$ are positive. Now consider what happens if the
decoder simulates some query $q \in Q \setminus (X \cup M)$. If $q$
reads some cell from $\foot_i(M) \setminus F_{<i}$, we claim it must
be a positive query. Indeed, $M \subset S_i$, so $\foot_i(M) \setminus
F_{<i} \subseteq F_i$. But any negative query that reads from $F_i$
was identified in $X$.

\begin{claim}
If a query $q \in Q \setminus (X \cup M)$ does not read any cell from
$\foot_i(M) \setminus F_{<i}$, the decoder can simulate it correctly.
\end{claim}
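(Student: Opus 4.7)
The plan is to show that under the hypothesis on $q$, the decoder's optimistic simulation of $q$ agrees with its real execution, so it correctly classifies $q$ as positive or negative. At query time the decoder knows three things: the cache contents (sent in the encoding), the current values of all cells in $F_{<i}$ (also sent, which in particular covers $W_{<i}$), and, for every other cell, the pre-epoch-$i$ value, which it can compute offline by simulating the publicly-known operations in epochs $>i$ from the initial state. The decoder then steps through $q$'s algorithm, resolving each read from whichever of these three sources applies. By construction, the set of cells it touches in this simulation is exactly $\foot_i(q)$.

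To prove correctness I would argue by contradiction: assume the decoder's simulation diverges from the real execution of $q$. Let $c$ be the first cell at which they disagree. Up to this point the two runs are identical, so both of them read $c$, and in particular $c \in \foot_i(q)$. The decoder's assumed value for $c$ is either its cached value, its sent value from $F_{<i}$, or its pre-epoch-$i$ snapshot value; in the first two cases the value is by definition correct, so the decoder must have used the snapshot, meaning $c \notin F_{<i}$. For the snapshot value to be wrong, $c$ must have been modified between the start of epoch $i$ and the query; since $c \notin F_{<i} \supseteq W_{<i}$, no modification happened in the smaller epochs, and so $c \in W_i \setminus F_{<i}$.

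Now apply the marking phase to $c$. Either (i) some positive query $q' \in S_i$ has $c \in \foot_i(q')$, in which case some such $q'$ was placed into $M$, so $c \in \foot_i(M) \setminus F_{<i}$, directly contradicting the hypothesis on $q$; or (ii) no positive query has $c$ in its footprint, so $c \notin \foot_i(S_i)$, forcing $q \notin S_i$. In case (ii), $q$ is negative, and since real and optimistic simulations agree up to and including the read of $c$, the real execution also reads $c \in W_i \setminus F_{<i} \subseteq F_i$. By the definition of $X$, this puts $q \in X$, contradicting $q \in Q \setminus (X \cup M)$.

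The main obstacle is being careful about \emph{when} the two simulations can first diverge and exactly what such a divergence tells us, since reads to $W_i \setminus F_{<i}$ are the only place the decoder can be wrong. Once this is isolated, the marking scheme is seen to fit its purpose exactly: every cell of $W_i \setminus F_{<i}$ that could mislead a positive-query simulation is already covered by $\foot_i(M)$, while any such cell read by a negative query automatically places that query in $F_i$ and hence in $X$. Thus the information the decoder lacks about epoch $i$ simply never matters for a query in $Q \setminus (X \cup M)$ satisfying the hypothesis.
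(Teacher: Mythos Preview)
Your proof is correct and takes essentially the same approach as the paper. The paper's own argument is terser: it splits directly on whether $q$ is positive or negative, whereas you split on whether the first offending cell $c$ lies in $\foot_i(q')$ for some positive $q'\in S_i$, and only then deduce the sign of $q$ in case~(ii). The two case analyses are equivalent --- a positive $q$ with $c\in\foot_i(q)$ automatically witnesses your case~(i), and your case~(ii) immediately forces $q$ negative before invoking the same $W_i\setminus F_{<i}\subseteq F_i$ / $q\in X$ argument the paper uses. Your explicit first-divergence framing is a nice touch: it makes transparent why $W_i\setminus F_{<i}$ is the \emph{only} place the decoder's simulation can go wrong, which the paper leaves implicit.
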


\begin{proof}
We will prove that such a query does not read any cell from $W_i
\setminus F_{<i}$, which means that the decoder knows all necessary
cells. First consider positive $q$. If $q$ reads a cell from $W_i
\setminus F_{<i}$ and $q$ is not marked, it means this cell marked
some other $q'\in M$. But that means the cell is in $\foot_i(q')
\subseteq \foot_i(M)$, contradiction.

Now consider negative $q$. Note that $W_i \setminus F_{<i} \subseteq
F_i$, so if $q$ had read a cell from this set, it would have been
placed in the set $X$.
\end{proof}

Of course, some queries may give wrong answers when simulated
correctly (Monte Carlo error). The decoder can fix these using
component 5 of the encoding.

It remains to analyze the expected size of the encoding. To bound the
binomial coefficients, we use the inequality: $\log_2 \binom{n}{m} \le
n \cdot H_b(\frac{m}{n})$. We will also use convexity of $H_b(\cdot)$
repeatedly.
\begin{enumerate*}
\item The cache size is $M \le \lambda^{i-1}$ bits, since $i \ge
  \imin$. 

\item We have $\E[|F_{<i}|] \le \E[|W_{<i}|] + |S_{<i}|\cdot t_q \le
  O(\lambda^{i-1} t_q)$. So this component takes $O(\lambda^{i-1} t_q
  \cdot B\lg n)$ bits on average.

\item Since $\E[|X|] = pk$, this takes $2k \cdot H_b(\frac{p}{2})$
  bits on average.

\item Since $\E[|W_i|] = t_u k \le (1-\eps) k$, this takes $2k \cdot
  H_b(\frac{1-\eps}{2})$ bits on average.

\item The probability of an error is at most $2C_\eps$ on $\Dno$ and
  $4C_\eps$ on $\calD_i$. So we expect at most $6 C_\eps \cdot k$
  wrong answers. This component takes $2k \cdot H_b(3 C_\eps)$ bits on
  average.
\end{enumerate*}
\noindent
This completes the proof of Lemma~\ref{lem:const-tu}, and our analysis
of update time $t_u = 1-\eps$.

\subsection{Subconstant Update Time}

The main challenge in the constant regime was that we couldn't
identify which cells were the ones in $W_i$; rather we could only
identify the queries that read them. Since $t_u \ll 1$ here, we will
be able to identify those cells among the cells read by queries
$Q$, so the footprints are no longer a bottleneck.

The main bottleneck becomes writing $W_{<i}$ to the encoding.
Following the trick of \cite{patrascu07bit}, our message will instead
contain $W_i \cap R_{<i}$ and the cache contents at the end of epoch
$i$. We note that this allows the decoder to recover $W_{<i}$. Indeed,
the keys $S_{<i}$ are chosen by public coins, so the decoder can
simulate the data structure after the end of epoch $i$. Whenever the
update algorithm wants to read a cell, the decoder checks the message
to see if the cell was written in epoch $i$ (whether it is in $W_i
\cap R_{<i}$), and retrieves the contents if so.

We bound $W_i \cap R_{<i}$ by the following, which can be seen as a
strengthening of Observation~\ref{obs:wi}:
\begin{lemma}   \label{lem:riwi}
At least a quarter of the choices of $i \in \{ \imin, \dots, \imax \}$
are good and satisfy $\E[ | W_i \cap R_{< i} |] = O(\lambda^{i-1} t_u
/ \log_\lambda n)$.
\end{lemma}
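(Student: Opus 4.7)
}
The plan is to bound a weighted sum of $\E[|W_i \cap R_{<i}|]$ over $i \in \{\imin, \ldots, \imax\}$, then apply Markov's inequality and combine with the earlier observation that at least half of the $i$'s are ``good.'' This is in the same spirit as Observation~\ref{obs:wi} but refines it by excluding the writes that no subsequent epoch ever reads.

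First I would set up a charging argument. Every cell $c \in W_i \cap R_{<i}$ is read in some epoch $j < i$; charge each such membership to one such read event. A given read event can witness membership in $W_i \cap R_{<i}$ for at most one $i$, namely the epoch that contains the most recent write (prior to the read) of the cell being read. Since the total number of read probes across epochs is $O(n t_u)$ in expectation by amortization, this yields $\sum_i \E[|W_i \cap R_{<i}|] \le O(n t_u)$. Combined with the per-epoch trivial bound $\E[|W_i \cap R_{<i}|] \le \E[|R_{<i}|] \le O(\lambda^{i-1} t_u)$ (because the number of probes in epochs $<i$ is $O(\lambda^{i-1} t_u)$), geometric summation also gives $\sum_i \E[|W_i \cap R_{<i}|] \le O(\lambda^{\imax-1} t_u)$.

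The key technical step is the weighted sum $\sum_i \E[|W_i \cap R_{<i}|]/\lambda^{i-1}$. Rewriting it cell by cell, each cell $c$ contributes $\sum_{i \in W(c),\, i > j_{\min}(c)} 1/\lambda^{i-1}$, where $j_{\min}(c)$ is the smallest-index epoch in which $c$ is read. The geometric series collapses each cell's contribution to $O(1/\lambda^{j_{\min}(c)})$, and summing over cells while using that the expected number of cells with $j_{\min}(c) = j$ is at most $\E[|R_j|] \le \lambda^j t_u$ gives the weighted bound. The goal is to push this bound down to $O(t_u)$ rather than the naive $O(t_u \log_\lambda n)$.

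Once the $O(t_u)$ weighted bound holds, Markov implies that for at least $3/4$ of $i \in \{\imin, \ldots, \imax\}$, $\E[|W_i \cap R_{<i}|] \le O(\lambda^{i-1} t_u / \log_\lambda n)$. Intersecting with the set of ``good'' $i$'s (at least $1/2$ of all $i$'s by the earlier observation) via inclusion-exclusion yields at least $1/2 + 3/4 - 1 = 1/4$ of the $i$'s satisfying both conditions, which is the claim. The main obstacle will be closing the gap between the naive $O(t_u \log_\lambda n)$ weighted bound and the desired $O(t_u)$; I expect this to require exploiting the injective structure of the read-to-write charging more carefully, so that each cell's most recent read is counted once globally rather than duplicated across every epoch where the cell is also written.
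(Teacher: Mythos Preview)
Your overall framework is right and matches the paper: bound the weighted sum $\sum_{i=\imin}^{\imax} \E[|W_i \cap R_{<i}|]/\lambda^{i-1}$ by $O(t_u)$, apply Markov over $i$ to get at least $3/4$ of the indices with $\E[|W_i\cap R_{<i}|]=O(\lambda^{i-1}t_u/\log_\lambda n)$, and intersect with the good $i$'s. The gap is in how you propose to get from the naive $O(t_u\log_\lambda n)$ to $O(t_u)$.

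The injective read-to-write charging is already as tight as it can be; ``exploiting it more carefully'' cannot help. For a \emph{fixed} query position the weighted sum can genuinely be $\Theta(t_u\log_\lambda n)$: consider a data structure that, at every one of the $\lambda^j$ steps in epoch $j$, performs $t_u$ reads, each of a cell whose last write lies in epoch $j{+}1$. Every such read contributes weight $1/\lambda^{j}$, so epoch $j$ contributes $t_u$ and the total is $t_u\log_\lambda n$. The charging here is already a bijection, so no combinatorial refinement of it can do better.

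The missing ingredient is the randomness of the \emph{query position}. Epochs are defined backwards from the query, and the query time is uniform in $[n/2,n]$, so the epoch boundaries are themselves random. The paper exploits this directly: for a fixed read at time $r$ whose cell was last written at time $w<r$, one computes, for each $i$, the probability (over the query placement) that the boundary between epochs $i$ and $i{-}1$ lands in $(w,r)$ while $w$ still lies in epoch $i$. This probability is at most $\min\{r-w,\,2\lambda^{i-1}\}/n$, and is zero unless $r-w<2\lambda^{i}$. Summing $\Pr[\cdot]/\lambda^{i}$ over $i$ gives a geometric series dominated by its first nonzero term, totalling $O(1/(n\lambda))$ per read event; multiplying by the at most $n t_u$ read events yields the $O(t_u)$ weighted bound (up to the $\lambda^{i}$ vs.\ $\lambda^{i-1}$ normalization). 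Keep your Markov-and-intersect endgame, but replace the per-realization charging with this per-$(w,r)$ analysis over the random query position.
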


\begin{proof}
We will now choose $i$ randomly, and calculate $\E[ |W_i \cap R_{<i}|
  / \lambda^i]$, where the expectation is over the random distribution
and random $i$. We will show $\E[ |W_i \cap R_{<i}| / \lambda^i] =
O(t_u / (\lambda \log_\lambda n))$, from which the lemma follows by a
Markov bound and union bound (ruling out the $i$'s that are not good).

A cell is included in $W_i \cap R_{<i}$ if it is read at some time $r$
that falls in epochs $\{0, \dots, i-1\}$, and the last time it was
written is some $w$ that falls in epoch $i$. For fixed $w<r$, let us
analyze the probability that this happens, over the random choice of
the query's position. There are two necessary and sufficient
conditions for the event to happen:
\begin{itemize*}
\item the boundary between epochs $i$ and $i+1$ must occur before $w$,
  so the query must appear before $w + \sum_{j\le i} \lambda^j < w +
  2\lambda^i$. Since the query must also appear after $r$, the event
  is impossible unless $r-w < 2\lambda^i$.
  
\item the boundary between epochs $i$ and $i-1$ must occur in $(w,r)$,
  so there are at most $r-w$ favorable choices for the query. Note
  also that the query must occur before $r + \sum_{j<i} \lambda^j$, so
  there are at most $2\lambda^{i-1}$ favorable choices.
\end{itemize*}
Let $i^\star$ be the smallest value such that $r-w <
2\lambda^{i^\star}$.  Let us analyze the contribution of this
operation to $\E[ |W_i \cap R_{<i}| / \lambda^i]$ for various $i$.  If
$i < i^\star$, the contribution is zero. For $i = i^\star$, we use
the bound $2\lambda^{i-1}$ for the number of favorable choices. Thus,
the contribution is $O(\frac{\lambda^{i-1}}{n} \cdot \lambda^{-i}) =
O(\frac{1}{n\lambda})$. For $i > i^\star$, we use the bound $r-w$ for
the number of favorable choices. Thus, the contribution is
$O(\frac{r-w}{n} \cdot \lambda^{-i}) = O(\frac{1}{n} \lambda^{i^\star
  - i})$.

We conclude that the contribution is a geometric sum dominated by
$O(\frac{1}{n\lambda})$. Overall, there are at most $n \cdot t_u$
memory reads in expectation, so averaging over the choice of $i$, we
obtain $\frac{1}{\log_\lambda n} \cdot O(\frac{1}{n\lambda}) \cdot
nt_u = O(\frac{t_u}{\lambda \log_{\lambda} n})$.
\end{proof}

Our main claim is:

\begin{lemma}   \label{lem:subconst-tu}
Let $i \in \{ \imin, \dots, \imax \}$ be chosen as in
Lemma~\ref{lem:riwi} and $k = \lambda^i$. Let $p$ be the probability
over $\Dno$ that the query reads a cell from $W_i \setminus
W_{<i}$. We can encode a vector of $2k$ bits with $k$ ones by a
message of expected size: $O(\lambda^{i-1} t_u \cdot B \lg \lambda) +
O(k t_u \lg \frac{t_q}{t_u}) + 2k \cdot \big[ H_b(\frac{p}{2}) +
  H_b(3C_\eps) \big]$.
\end{lemma}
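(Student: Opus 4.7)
My plan is to flesh out the encoding strategy outlined just before the lemma statement. The encoder puts in the message the cache contents at the end of epoch $i$ and the addresses and contents of all cells in $W_i \cap R_{<i}$. Using the public coins for $S_{<i}$, the decoder replays the updates of epochs $<i$, consulting the message whenever an update wants to read a cell from $W_i$, and thereby reconstructs the exact memory at query time outside of $W^* := W_i \setminus W_{<i}$. By Lemma~\ref{lem:riwi}, the number of transmitted cells is $O(\lambda^{i-1} t_u / \log_\lambda n)$ in expectation, and each cell contributes $O(B \lg n)$ bits; using $\lg n / \log_\lambda n = \lg \lambda$ and absorbing the $M \le \lambda^{i-1}$ bits of cache (dominated since $t_u B \ge 1$), this component costs $O(\lambda^{i-1} t_u B \lg \lambda)$ bits in expectation.

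Next, the decoder simulates every $q \in Q$ optimistically against its (possibly stale on $W^*$) memory, producing a footprint $\foot(q)$ of read cells. A step-by-step induction shows that the optimistic simulation and the true execution of $q$ read exactly the same cell at every step up to and including the first read from $W^*$, because the two memories agree outside of $W^*$. Consequently $\foot(q) \cap W^* \neq \emptyset$ iff $R_q \cap W^* \neq \emptyset$, i.e., iff $q$ is in the ``bad'' set $Q_{bad}$ of queries whose simulation is corrupted. To let the decoder carry out this test, the encoder transmits $W^* \cap \foot(Q)$ as a subset of $\foot(Q)$ (a set the decoder can itself compute by running its own simulations). Since $|\foot(Q)| \le 2k t_q$ holds deterministically and $|W^* \cap \foot(Q)| \le |W_i|$ has expectation at most $t_u k$, the estimate $\log_2\binom{n}{m} \le n H_b(m/n)$ together with Jensen's inequality bound the expected cost by $2k t_q \cdot H_b\bigl(\tfrac{t_u}{2 t_q}\bigr) = O(k t_u \lg(t_q/t_u))$.

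It remains to separate positives from negatives inside $Q_{bad}$. For this, the encoder sends $X := Q_{bad} \setminus S_i$ as a subset of $Q$. A negative query belongs to $Q_{bad}$ precisely when it reads a cell of $W^*$, an event of probability $p$ over $\Dno$, so $\E|X| \le pk$; by Jensen the expected encoding cost is at most $2k H_b(p/2)$. With $X$ in hand, the decoder labels $X$ negative, $Q_{bad} \setminus X$ positive, and accepts the output of its (correct) simulation for every $q \notin Q_{bad}$. Finally, Monte Carlo failures occur with probability at most $2C_\eps$ on $\Dno$ and $4C_\eps$ on $\calD_i$, giving at most $6C_\eps k$ wrong answers in expectation; sending their positions in $Q$ adds another $2k H_b(3C_\eps)$ bits, completing the four claimed components.

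The most delicate point is verifying that the decoder's replay of epochs $<i$ really does reconstruct the correct value of every cell outside $W^*$ at query time. One has to check by induction on the timeline that whenever an epoch-$<i$ update reads a memory cell, the cell was either (a) written earlier in the decoder's own simulation of epochs $<i$, (b) written in some epoch $>i$ and therefore known from the public-coin simulation of epochs $i_{\max},\ldots,i+1$ starting from the empty initial state, or (c) in $W_i$ and therefore (being read in an epoch $<i$) in $W_i \cap R_{<i}$, which was transmitted; no other case can occur. Once this invariant is in place, the remainder is a routine summation of the four expected costs above against the $2k - O(\lg k)$ bits of entropy.
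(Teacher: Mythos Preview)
Your argument is correct and follows the same overall encoding scheme as the paper (cache at end of epoch $i$, the cells $W_i\cap R_{<i}$, the negative-query set $X$, and the Monte Carlo error set), with the same cost bookkeeping for each of these components.

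The one genuine difference is in how you let the decoder recognize which queries touch $W^*=W_i\setminus W_{<i}$. The paper does this indirectly: for each cell $c\in W^*$ it \emph{marks} a positive query whose first $W^*$-read is $c$, sends the marked set $M\subseteq Q$ (costing $\log\binom{2k}{|W_i|}=O(kt_u\lg\frac{1}{t_u})$ bits) together with, for each $q\in M$, the index of that first probe ($O(kt_u\lg t_q)$ bits). The decoder then simulates $M$ to recover the set $W^\star$ of first-read cells and flags any other query that hits $W^\star$. You instead have the decoder compute $\foot(Q)$ from its own optimistic simulations and transmit $W^*\cap\foot(Q)$ as a subset of $\foot(Q)$; the bound $|\foot(Q)|\le 2kt_q$ and $\E|W^*\cap\foot(Q)|\le kt_u$ together with Jensen give the same $O(kt_u\lg\frac{t_q}{t_u})$ bits. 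Your route is a little more direct (no marking invariant to verify), while the paper's route never needs the decoder to materialize the full $\foot(Q)$; both hinge on the same observation that the optimistic and true executions agree up to the first $W^*$-read, and both land on the identical asymptotic term.

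Two cosmetic notes. First, the paper additionally lists the cache state right before the query in component~1, but as you implicitly use, this is recovered for free by the decoder's replay of epochs $<i$, so omitting it is harmless. Second, your absorption of the $M\le\lambda^{i-1}$ cache bits into $O(\lambda^{i-1}t_uB\lg\lambda)$ relies on $t_uB\ge 1$; this is fine here since update time below $1/B$ is unattainable, but it is worth stating the assumption explicitly.
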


We first show how the lemma implies the desired lower bound. Let
$\lambda$ be such that the first term in the message size is $\le
\lambda^i = k$. This requires setting $\lambda = O(t_u B \lg (t_u B))$.

Note that the lower bound is the same for $t_u=1-\eps$ and, say, $t_u
= 1/\sqrt{B}$. Thus, we may assume that $t_u\le 1/\sqrt{B} \le
1/\sqrt{\lg n}$. Therefore $k t_u \lg \frac{t_q}{t_u} = O(k \cdot
\frac{\lg\lg n}{\sqrt{\lg n}}) = o(k)$.

Since the entropy is $2k-o(\lg n)$, we must have $\frac{1}{2} +
H_b(\frac{p}{2}) + H_b(3C_\eps) \ge 1-o(1)$. For a small enough
$C_\eps$, we obtain $H_b(\frac{p}{2}) = \Omega(1)$, so $p= \Omega(1)$.

We have shown that a query over $\Dno$ reads a cell from $W_i
\setminus W_{<i}$ with constant probability, for at least a quarter of
the choices of $i$. By linearity of expectation $t_q =
\Omega(\log_\lambda n) = \Omega(\lg n / \lg (Bt_u))$.

\paragraph{Proof of Lemma \ref{lem:subconst-tu}.}
The encoding will contain:
\begin{enumerate*}
\item The cache contents at the end of epoch $i$, and right before the
  query.

\item The address and contents of cells in $W_i \cap R_{<i}$. The
  decoder can recover $W_{<i}$ by simulating the updates after epoch
  $i$.

\item Let $X\subseteq Q\setminus S_i$ be the set of negative queries
  that read at least one cell from $W_i \setminus W_{<i}$. We encode
  $X$ as a subset of $Q$. 

\item For each cell $c \in W_i \setminus W_{<i}$, mark some positive
  query $q\in S_i$ such that $c$ is the first cell from $W_i \setminus
  W_{<i}$ that $q$ reads. Note that distinct cells can only mark
  distinct queries, but some cells may not mark any query if no
  positive query reads them first among the set.  Let $M$ be the set
  of marked queries, and encode $M$ as a subset of $Q$.

\item For each $q\in M$, encode the number of cell probes before the
  first probed cell from $W_i \setminus W_{<i}$, using $O(\lg t_q)$
  bits.

\item The subset of queries from $Q$ that return a wrong answer (Monte
  Carlo error).
\end{enumerate*}

\noindent
The decoder starts by simulating the queries $M$, and stops at the
first cell from $W_i \setminus W_{<i}$ that they read (this is
identified in part 5 of the encoding). The simulation cannot continue
because the decoder doesn't know these cells. Let $W^\star \subseteq
W_i \setminus W_{<i}$ be the cells where this simulation stops.

The decoder knows that queries from $M$ are positive and queries from
$X$ are negative. It will simulate the other queries from $Q$. If a
query tries to read a cell from $W^\star$, the simulation is stopped
and the query is declared to be positive. Otherwise, the simulation is
run to completion.

We claim this simulation is correct. If a query is negative and it is
not in $X$, it cannot read anything from $W_i \setminus W_{<i}$, so
the simulation only uses known cells. If a query is positive but it is
not in $M$, it means either: (1) it doesn't read any cell from $W_i
\setminus W_{<i}$ (so the simulation is correct); or (2) the first
cell from $W_i \setminus W_{<i}$ that it reads is in $W^\star$,
because it marked some other query in $M$. By looking at component 6,
the decoder can correct wrong answers from simulated queries.

It remain to analyze the size of the encoding:
\begin{enumerate*}
\item The cache size is $M = O(\lambda^{i-1})$ bits.

\item By Lemma~\ref{lem:riwi}, $\E[|W_i \cap R_{<i}|] =
  O(\lambda^{i-1} \frac{t_u} {\log_\lambda n})$, so this component
  takes $O(\lambda^{i-1} t_u \log \lambda \cdot B)$ bits.

\item Since $\E[|X|] \le p\cdot k$, this component takes $\E[\log_2
  \binom{2k}{|X|}] \le 2k \cdot H_b(\frac{p}{2})$ bits.

\item Since $\E[|W_i|] \le \lambda^i t_u = kt_u$, this takes $\E[ \log_2
  \binom{2k}{|W_i|} ] \le k \cdot O(t_u \lg \frac{1}{t_u})$ bits.

\item This takes $\E[|W_i|] \cdot O(\lg t_q) = O(k t_u \lg t_q)$ bits.

\item We expect $2C_\eps\cdot k$ wrong negative queries and $4C_\eps
  \cdot k$ wrong positive queries, so this component takes $2k \cdot
  H_b(3 C_\eps)$ bits on average.
\end{enumerate*}

\noindent
This completes the proof of Lemma~\ref{lem:subconst-tu}, and our
lower bound.

\end{document}